\newtheorem{theorem}{{Theorem}}
\newtheorem{definition}{\textbf{Definition}}
\newtheorem{corollary}{\textbf{Corollary}}
\def\E{{\mathbb E}}
\def\normal{{\sf N}}
\newcommand{\SortNoop}[1]{}
\newcommand {\bE} {\mathbb{E}}
\newcommand {\hx} {\widehat{x}}
\newcommand {\bW} {\overline{W}}
\newcommand {\ind}{\mathbbm{1}}
\newcommand {\indicator}[1]{\ind_{\{ #1 \}}}
\newcommand {\reals}{\mathbb{R}}
\newcommand {\complex}{\mathbb{C}}
\newcommand {\integers}{\mathbb{Z}}
\newcommand {\ham} {{\mathcal H}}
\newcommand {\matB} {{\mathbf{B}}}
\newcommand {\matC} {{\mathbf{C}}}
\newcommand {\matA} {{\mathbf{A}}}
\newcommand {\matD} {{\mathbf{D}}}
\newcommand {\matone} {{1}}
\newcommand {\matW} {\mathbf{W}}
\newcommand {\trace} {\text{\rm Tr}}
\newcommand {\matI} {{\mathbf{I}}}
\newcommand {\inp} {{\sf in}}
\def\de{{\rm d}}
\def\argmin{{\rm argmin}}
\def\<{\langle}
\def\>{\rangle}
\def\cX{{\cal X}}
\begin{document}

\title{Applications of Lindeberg Principle in Communications
and Statistical Learning}

\author{Satish Babu Korada\thanks{Department of Electrical Engineering,
Stanford University} \;\;\; and\;\;\;
Andrea Montanari${}^{*,}$\thanks{Department of Statistics,
Stanford University}}

\date{}

\maketitle

\begin{abstract}
We use a generalization of the Lindeberg principle developed 
by Sourav Chatterjee to prove universality properties for various problems in
communications, statistical learning and random matrix theory.
We also show that these systems can be viewed as the  
limiting case of a properly defined sparse system.
The latter result is useful when the sparse systems are easier to analyze than
their dense counterparts. The list of problems we
consider is by no means exhaustive. We believe that the ideas can 
be used in many other problems relevant for information theory.
\end{abstract} 

\section{Introduction}\label{sec:introduction}

The phenomenon of universality is common to many disciplines of science and
engineering. A well known example is the central limit theorem which, 
in a simple version, says the following. 
Let $\{X_i\}_{i\ge 1}$ be a collection of i.i.d.
random variables with mean zero and variance  $\bE[X_i^2] = 1$. Then 
\begin{align*}
\frac{1}{\sqrt n}\sum_{i=1}^n X_i \stackrel{{\rm d}}{\rightarrow}\normal(0,1),
\end{align*}
where $\stackrel{{\rm d}}{\rightarrow}$ denotes convergence in distribution 
as $n\to\infty$, and
$\normal(0,1)$ is a Gaussian random variable with mean zero and
variance one. In particular, the central limit theorem implies 
that the distribution of $n^{-1/2}\sum_{i=1}^n X_i$ is asymptotically
independent of the details of the distribution of the summands $X_i$.  
In other words, its limit  is ``universal'' for a large class of 
summands' distributions. 
Other examples include the limiting spectrum of random matrices
\cite{MarPas}, and various properties of statistical mechanics models
\cite{StatMech}. 

Examples in communications theory where universal properties have been
established include the MIMO communications problem \cite{Tel99}. 
In
these problems it was shown the capacity of the system is independent of the
distribution of the fading coefficients and the spreading sequences
respectively.

A different research area in which universality ideas 
appear ubiquitous is compressed sensing. 
Donoho and Tanner \cite{DoTUniversality} carried out a systematic
empirical investigation of universality in this context. In particular
they showed that the phase transition boundary in the sparsity-undersampling
tradeoff is universal for a large class of sensing matrices. 
The precise location of this phase transition was determined earlier on
in the case of Gaussian sensing matrices \cite{DoT09}.

A related phenomenon which we study here is the sparse-dense equivalence. As an 
example consider a uniformly random regular graph $G_n$ of degree $d$ 
over $n$ vertices. Let $A_n\in\reals^{n\times n}$ be the symmetric matrix 
whose non-vanishing entries correspond to edges in $G_n$ and take
values in $\{+1/\sqrt{d},-1/\sqrt{d}\}$ independently and uniformly at random.
As $n\to\infty$ the spectral measure of such a matrix converges
almost surely \cite{Kesten,McKay} to a well defined limit $\rho_d(\de\lambda)$
supported on $[-2\sqrt{1-1/d},2\sqrt{1-1/d}]$, where:
\begin{eqnarray}
\rho_d(\de\lambda) = \frac{1}{2\pi}\, 
\frac{\sqrt{4(1-1/d)-\lambda^2}}{1-\lambda^2/d}\, \de\lambda\, .
\end{eqnarray}
If we now consider the $d\to\infty$ limit, this
distribution converges weakly to the celebrated semi-circle law
\begin{eqnarray}
\rho_\infty(\de\lambda) = \frac{1}{2\pi}\, 
\sqrt{4-\lambda^2}\, \de\lambda\, .
\end{eqnarray}
This is the limiting spectrum of the standard (dense) Wigner matrices. 
We refer to this type of property as to a sparse-dense equivalence. Showing 
such a relationship can be particularly useful
when the analysis of the sparse system is easier than its dense
counterpart. Specific examples will be provided below. 

Universality and sparse-dense equivalence can have far reaching consequences
in communications and information theory.
In this paper, we demonstrate this by studying both phenomena 
within a common framework, and obtaining  new results 
in each of the above mentioned problems. The main tool that we use is the
following generalization of Lindeberg's principle that was proved in
\cite{Sou06}. 
%
%
\subsection{Lindeberg Principle}

Given $f:\reals^n\to\reals$, the generalized Lindeberg
principle provides conditions under which the distribution of 
$f(X_1,\dots,X_n)$ is approximately insensitive
to the distribution of its arguments $X_1,\dots,X_n$
which are assumed to be independent. This generalizes
the classical Lindeberg proof of the central limit theorem, that focused
on $f(x_1,\dots,x_n) = (x_1+\dots +x_n)/\sqrt{n}$. 

Let us restate here the main result of \cite{Sou06}.
\begin{theorem}[Generalized Lindeberg Principle, 
\cite{Sou06}]\label{thm:sourav}
Let $U=(U_1,\dots,U_n)$ and $V = (V_1,\dots,V_n)$ be two random 
vectors with mutually independent components. For $1\leq i\leq
n$, define 
\begin{align*}
a_i &\equiv |\bE[U_i] - \bE[V_i]|,\\
b_i &= |\bE[U_i^2] - \bE[V_i^2]|.
\end{align*}
and further assume $\max_i(\bE\{|U_i|^3\} + \bE\{|V_i|^3\})\le M_3$. Suppose
$f:\reals^n\to \reals$ is a thrice continuously differentiable 
function, and for
$r=1,2,3$, let $L_r(f)$ be a finite constant such that $|\partial^r_i f(u)|\leq
L_r(f)$ for each $i$ and $u\in\reals^n$, 
where $\partial_i^r$ denotes the $r$-fold
derivative in the $i$th coordinate. Then
\begin{align*}
|\bE[f(U)]-\bE[f(V)]| \leq \sum_{i=1}^n(a_i L_1(f) + \frac{1}{2}b_i L_2(f)) +\frac{1}{6}nL_{3}(f)M_3.
\end{align*}
\end{theorem}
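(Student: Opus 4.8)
The plan is to use the classical Lindeberg swapping argument: interpolate between $U$ and $V$ by replacing one coordinate at a time, control each single-coordinate swap with a Taylor expansion, and accumulate the errors through a telescoping identity. Concretely, for $0\le i\le n$ I would introduce the hybrid vectors $W^{(i)}=(V_1,\dots,V_i,U_{i+1},\dots,U_n)$, so that $W^{(0)}=U$ and $W^{(n)}=V$, and write
\begin{align*}
\bE[f(U)]-\bE[f(V)] = \sum_{i=1}^n \big(\bE[f(W^{(i-1)})]-\bE[f(W^{(i)})]\big).
\end{align*}
The consecutive hybrids $W^{(i-1)}$ and $W^{(i)}$ differ only in the $i$-th slot, where the former carries $U_i$ and the latter $V_i$, while all other coordinates are shared.

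For a fixed $i$, I would isolate this shared part by setting $Z^{(i)}=(V_1,\dots,V_{i-1},0,U_{i+1},\dots,U_n)$, so that $W^{(i-1)}=Z^{(i)}+U_i e_i$ and $W^{(i)}=Z^{(i)}+V_i e_i$ with $e_i$ the $i$-th coordinate vector. Taylor expanding $f$ to second order in the $i$-th argument about $Z^{(i)}$, with a Lagrange-form third-order remainder, gives
\begin{align*}
f(Z^{(i)}+U_i e_i) = f(Z^{(i)}) + U_i\,\partial_i f(Z^{(i)}) + \frac{U_i^2}{2}\,\partial_i^2 f(Z^{(i)}) + \frac{U_i^3}{6}\,\partial_i^3 f(\xi_i),
\end{align*}
for some intermediate point $\xi_i$, together with the analogous identity for $V_i$. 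Subtracting the two and taking expectations, the zeroth-order terms cancel exactly.

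The crucial point, and the one step that requires care, is the independence structure: by the mutual independence of the components, $Z^{(i)}$ is a function of $V_1,\dots,V_{i-1},U_{i+1},\dots,U_n$ only, and is therefore independent of both $U_i$ and $V_i$. This lets me factor the first- and second-order contributions, so that $\bE[U_i\,\partial_i f(Z^{(i)})]-\bE[V_i\,\partial_i f(Z^{(i)})]=(\bE[U_i]-\bE[V_i])\,\bE[\partial_i f(Z^{(i)})]$, whose absolute value is at most $a_i L_1(f)$; the second-order terms factor identically and contribute at most $\frac{1}{2}b_i L_2(f)$. The third-order remainders are bounded crudely through $|\partial_i^3 f|\le L_3(f)$, yielding $\frac{1}{6}L_3(f)\big(\bE[|U_i|^3]+\bE[|V_i|^3]\big)$.

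Finally, I would sum these per-coordinate estimates over $i$, apply the triangle inequality to the telescoping sum, and invoke $\bE[|U_i|^3]+\bE[|V_i|^3]\le M_3$ to collapse the remainder contribution to $\frac{1}{6}nL_3(f)M_3$. The genuine subtlety is confined to the factorization in the independence step; the moment \emph{differences} $a_i$ and $b_i$ appear precisely because that factorization leaves behind $\bE[U_i]-\bE[V_i]$ and $\bE[U_i^2]-\bE[V_i^2]$ rather than the full moments, which is exactly what makes the bound sensitive only to moment mismatch up to second order.
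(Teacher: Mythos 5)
Your proof is correct and follows essentially the same route as the paper: the paper defers the proof of this theorem to Chatterjee but proves the companion Theorem~2 by exactly your telescoping-plus-Taylor-expansion-about-zero argument, exploiting independence of the shared coordinates from $\{U_i,V_i\}$ in the same way. The only cosmetic differences are the direction of the swap and your use of a Lagrange-form rather than integral-form third-order remainder, which is immaterial under the uniform bound $L_3(f)$.
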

Notice that, while this theorem explicitly bounds the change 
in expectation $f(\,\cdot\, )$, it gives control on its distribution
as well, by applying it to $g(f(\, \cdot\, ))$, for $g:\reals\to\reals$ 
belonging to a suitable class of test functions.
 
In many problems of interest for this paper,
the bound on the derivatives of $f$ required by the last
theorem does not hold,   and  a more careful analysis is needed.
For that purpose we use the following theorem.
The proof is analogous to the one of Theorem~\ref{thm:sourav},
and is  provided in Section~\ref{sec:Proof1}.
\begin{theorem}\label{thm:souravmodified}
Let $U=(U_1,\dots,U_n)$ and $V = (V_1,\dots,V_n)$ be two random 
vectors with mutually independent components.
Let $\{a_i\}_{1\le i\le n}$ and $\{b_i\}_{1\le i\le n}$ be as defined in 
Theorem~\ref{thm:sourav}. Then
\begin{align*}
|\bE[f(U)] - \bE[f(V)]|\leq \sum_{i=1}^n\Big\{& a_i \bE[|\partial_i
f(U_1^{i-1},0,V_{i+1}^n)|] + \frac{1}{2}b_i
\bE[|\partial_i^2f(U_1^{i-1},0,V_{i+1}^n)|]\\
&+ \frac{1}{2}\bE\int_0^{U_i} \big[|\partial^3_{i}f(U_1^{i-1},s,V_{i+1}^n)|\big](U_i-s)^2\,
\de s\\ 
& + \frac{1}{2}\bE\int_0^{V_i} \big[|\partial^3_{i}f(U_1^{i-1},s,V_{i+1}^n)|\big](V_i-s)^2\, \de s
\Big\}\, .
\end{align*}
\end{theorem}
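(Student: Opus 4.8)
The plan is to reproduce the classical Lindeberg telescoping (one-at-a-time swapping) argument, but to retain the third-order Taylor remainder in exact integral form inside the expectation, rather than dominating it by a uniform bound $L_3(f)$ as in Theorem~\ref{thm:sourav}. Since the left-hand side $|\bE[f(U)]-\bE[f(V)]|$ depends only on the marginal laws of $U$ and of $V$, while the right-hand side is naturally read under the product coupling, we may and do assume that all $2n$ variables $U_1,\dots,U_n,V_1,\dots,V_n$ are mutually independent. Introduce the hybrid vectors $Z_i=(U_1,\dots,U_i,V_{i+1},\dots,V_n)$ for $0\le i\le n$, so that $Z_0=V$ and $Z_n=U$, and telescope:
$$\bE[f(U)]-\bE[f(V)] = \sum_{i=1}^n\big(\bE[f(Z_i)]-\bE[f(Z_{i-1})]\big).$$
Because $Z_i$ and $Z_{i-1}$ differ only in coordinate $i$ (it is $U_i$ in the former and $V_i$ in the latter), it suffices to bound each summand and then sum by the triangle inequality.

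For the $i$-th term, condition on the ``background'' $\mathcal{G}_i:=\sigma(U_1,\dots,U_{i-1},V_{i+1},\dots,V_n)$ and set $h_i(x):=f(U_1^{i-1},x,V_{i+1}^n)$, a $\mathcal{G}_i$-measurable random function, so that the summand equals $\bE[h_i(U_i)-h_i(V_i)]$ with $U_i$ and $V_i$ each independent of $\mathcal{G}_i$. Expanding $h_i$ about $0$ by Taylor's theorem with integral remainder,
$$h_i(x) = h_i(0) + h_i'(0)\,x + \tfrac12 h_i''(0)\,x^2 + \tfrac12\int_0^x h_i'''(s)\,(x-s)^2\,\de s,$$
and subtracting the expansions at $x=U_i$ and $x=V_i$, the constant term $h_i(0)$ cancels exactly, leaving a first-order term $h_i'(0)(U_i-V_i)$, a second-order term $\tfrac12 h_i''(0)(U_i^2-V_i^2)$, and the two remainder integrals.

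The key step is to take expectations coordinate by coordinate using independence. Since $h_i'(0)=\partial_i f(U_1^{i-1},0,V_{i+1}^n)$ and $h_i''(0)=\partial_i^2 f(U_1^{i-1},0,V_{i+1}^n)$ are $\mathcal{G}_i$-measurable and $U_i,V_i$ are independent of $\mathcal{G}_i$, conditioning on $\mathcal{G}_i$ factors the expectations:
\begin{align*}
\bE[h_i'(0)(U_i-V_i)] &= \bE[h_i'(0)]\,(\bE[U_i]-\bE[V_i]),\\
\bE\big[\tfrac12 h_i''(0)(U_i^2-V_i^2)\big] &= \tfrac12\,\bE[h_i''(0)]\,(\bE[U_i^2]-\bE[V_i^2]).
\end{align*}
Taking absolute values and using $|\bE[h_i'(0)]|\le\bE[|h_i'(0)|]$ produces exactly the first two claimed terms, $a_i\,\bE[|\partial_i f(U_1^{i-1},0,V_{i+1}^n)|]$ and $\tfrac12 b_i\,\bE[|\partial_i^2 f(U_1^{i-1},0,V_{i+1}^n)|]$. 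For the two remainder integrals I would simply apply the triangle inequality at the level of the full summand, then pull the absolute value inside the expectation and the integral, which yields the last two terms. Summing over $i$ then completes the bound.

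I expect no deep obstacle: the argument is structurally identical to Theorem~\ref{thm:sourav}, with the only genuinely new ingredient being that the remainder is kept intact instead of bounded by $L_3(f)M_3$. The one point requiring real care is the sign and orientation of the remainder integrals when $U_i$ or $V_i$ is negative, since $\tfrac12\int_0^{x}h_i'''(s)(x-s)^2\,\de s$ need not be nonnegative for $x<0$; one must check on both sign branches that $\big|\tfrac12\int_0^{x}h_i'''(s)(x-s)^2\,\de s\big|\le \tfrac12\big|\int_0^{x}|h_i'''(s)|(x-s)^2\,\de s\big|$ before identifying the bound with the stated expression. Secondary, routine points are the justification of the ``WLOG independence'' coupling noted above and the Fubini/differentiation-under-the-expectation steps, both of which are immediate given the thrice-continuous-differentiability of $f$ inherited from Theorem~\ref{thm:sourav}.
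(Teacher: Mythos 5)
Your proposal is correct and follows essentially the same route as the paper: telescoping over the hybrid vectors, a third-order Taylor expansion with integral remainder about the point whose $i$-th coordinate is $0$, and factoring the first- and second-order terms via the independence of $U_i,V_i$ from the remaining coordinates. Your extra remarks on the product coupling and on the orientation of the remainder integral for negative arguments are sound points of care that the paper leaves implicit.
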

%
%
\section{Applications}\label{sec:applications}

In this section we discuss the application of Theorem 
\ref{thm:souravmodified} to a problem from communications theory
(code division multiple access channels), 
and one from statistical learning theory 
(estimation via LASSO). We also revisit a standard model from statistical
mechanics (the Sherrington-Kirkpatrick model), and 
the spectrum of Wishart matrices, which is related to capacity of MIMO 
channels.
In each of these cases, Theorem \ref{thm:souravmodified} 
implies both universality and sparse-dense equivalence results. 
We will not try to be exhaustive, but rather to point out some 
selected conclusion. This Section contains definitions and statements, 
while proofs are deferred to section \ref{sec:Proof2}.

In the following, we use uppercase letters, e.g, $X,Y$, to 
denote random variables and their
lowercase counterparts, e.g. $x,y$, to denote realizations
of such random variables.
We also use boldface characters to denote random matrices,
with the subscript to indicate their dimension,
e.g. $\matA_n$, $\matB_n$.

Most of our results concern random matrices with i.i.d. 
entries and apply under some simple centering and
normalization conditions, provided the entries have finite 
sixth moment. Rather than repeating these conditions at
each of the results below, we introduce them once and for all.
\begin{definition}[Random Matrices of Standard Type] 
Let $\matA_n = \{A_{ij}\}_{1\le i\le m,1\le j\le n}$ be a 
sequence of random matrices indexed by their dimensions $m$ and $n$
(with $m=m_n$ an appropriate sequence of integers).
We say that $\matA_n$ is a random matrix of \emph{standard type}
if the entries $\{A_{ij}\}_{i,j\ge 1}$ form an array 
of independent and identically distributed random variables with 
$\bE[A_{ij}]=0$, $\bE[A^2_{ij}]=1$ and 
$\bE[A_{ij}^6]\le {K} <\infty$, for some $K$ independent of $m,n$.
\end{definition}
%
%
\subsection{Capacity of a CDMA System}\label{sec:cdma}

Code Division Multiple Access (CDMA) is a widely used communication system
between multiple users and a common receiver
\cite{VerduBook}. The scheme
consists of $n$ users modulating their information
sequence by a signature sequence (spreading sequence) of length
$m$ and transmitting the resulting signal. The number $m$ is sometimes referred
to as the spreading gain or the number of chips per sequence. The receiver
obtains the sum of all transmitted signals and the noise which is 
often assumed to be white and Gaussian (AWGN). 

For the sake of simplicity, we will assume antipodal signals:
each user wishes to communicate a symbol $X_k \in \{ + 1,-1\}$, 
to the common receiver. 
User $k$ uses a signature
sequence $(A_{1k},\dots,A_{m k})$, with $A_{ik}\in\reals$.
The received signal $Y_i$ in the $i$-th time interval is given by 
\begin{equation*}
Y_i= \sum_{k=1}^n A_{ik}\, X_k +  \sigma\, Z_i\, ,
\end{equation*}
where $Z_i$ are i.i.d. copies of $\normal(0,1)$
and therefore the noise power is $\sigma^2$.

We use $x^{\inp}=(x_1,\dots,x_n)$ to denote any specific realization 
of the transmitted symbols, and will assume that a realization
of such symbols is used uniformly at random. The corresponding
random vector is $X^{\inp}= (X_1,\dots,X_n)$ while 
$Y=(Y_1,\dots,Y_m)$ is the received signal. Typically $X^{\inp}$ is chosen to be
uniformly distributed over $\{+1,-1\}^n$. In this paper we restrict to this
case. However it is possible to generalize the results below 
to a large class of distributions for the symbol $X_i$.

We write $A_n$ for the $m\times n$ matrix 
$\{A_{ik}\}_{1\le i\le m,1\le j\le n}$. 
Let $C_n(\matA_n)$ denote the capacity of such system, i.e.
the number of bits per user that can be reliably transmitted
to the common receiver under the above constraints. Explicitly we have
\begin{align}\label{capacity}
C_n(A_n)&=\log 2 -\frac{1}{2}\alpha
-\frac{1}{n}\bE_Y\log\Big\{\sum_{x\in\{+1,-1\}^n}e^{-\frac{1}{2\sigma^2}\Vert
Y-A_n x\Vert_2^2}\Big\}\, .
\end{align}
Here expectation $\E_Y$ is taken over the received signal. 

Random spreading sequences were initially considered in \cite{GrA96}. 
Here, the signature sequences are modeled as random vectors
with i.i.d. components $\{A_{ik}\}_{1\le i\le m, 1\le k\le n}$.
Without loss of generality we can assume $\E\{A_{ik}\}= 0$
and $\E\{A_{ik}^2\}=1$. We will be interested in 
the large system limit $m,n\to \infty$ with $\alpha=m/n$
fixed.

In order to keep the average power (per symbol) equal to $1$,
we will rescale the signature matrix by a factor $1/n$. 
For a random signature matrix $\matA_n$, we consider 
therefore the capacity $C_n(m^{-1/2}\matA_n)$, which is itself random. 
As proved in \cite{KoM08}, $C_n(m^{-1/2}\matA_n)$ does in fact 
concentrate exponentially around its expectation. This motivates us to
focus on its expectation.
\begin{theorem}[Universality of the Capacity of random CDMA
sytems]\label{thm:cdmauniversal}
Let $\matA_n = \{A_{ij}\}_{1\le i\le m,1\le j\le n}$ and 
$\matB_n = \{B_{ij}\}_{1\le i\le m,1\le j\le n}$ denote two 
$m\times n$ dimensional random  spreading matrices  of standard type.  Then
\begin{align*}
\lim_{n\to\infty, m=n\alpha}\big\{\bE[C_n(m^{-1/2}\matA_n)] -\bE[C_n(m^{-1/2}\matB_n)]
\big\} = 0\, .
\end{align*}
\end{theorem}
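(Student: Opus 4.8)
The plan is to interpolate between the two ensembles one matrix entry at a time using Theorem~\ref{thm:souravmodified}, applied to the free-energy functional hidden in \eqref{capacity}. First I would substitute the channel relation $Y = m^{-1/2}\matA_n X^{\inp} + \sigma Z$ directly into \eqref{capacity}, so that
\[
\bE[C_n(m^{-1/2}\matA_n)] = \log 2 - \tfrac{\alpha}{2} - \frac{1}{n}\,\bE_{X^{\inp},Z}\bE_{\matA_n}\big[f_{X^{\inp},Z}(\matA_n)\big],
\]
where, for fixed realizations $\xi$ of $X^{\inp}$ and $z$ of $Z$,
\[
f_{\xi,z}(\matA) = \log\Big\{\sum_{x\in\{+1,-1\}^n}\exp\Big(-\tfrac{1}{2\sigma^2}\big\|m^{-1/2}\matA(\xi-x)+\sigma z\big\|_2^2\Big)\Big\}.
\]
By the triangle inequality it suffices to bound $|\bE_{\matA}[f_{\xi,z}(\matA)] - \bE_{\matB}[f_{\xi,z}(\matB)]|$ for each fixed $(\xi,z)$ and then integrate over $(\xi,z)$. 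Here $f_{\xi,z}$ is a smooth function of the $mn$ entries of the matrix, which I regard as the independent coordinates $U_i$ (resp.\ $V_i$) of Theorem~\ref{thm:souravmodified}. Because both ensembles are of standard type, $\bE[A_{jk}]=\bE[B_{jk}]=0$ and $\bE[A_{jk}^2]=\bE[B_{jk}^2]=1$, so every $a_i$ and every $b_i$ vanishes and only the two third-order remainder integrals of Theorem~\ref{thm:souravmodified} survive.

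Next I would compute these third derivatives. Writing $H(x)=\tfrac{1}{2\sigma^2}\|m^{-1/2}\matA(\xi-x)+\sigma z\|_2^2$ and letting $r=m^{-1/2}\matA(\xi-x)+\sigma z$ be the residual field with components $r_j$, differentiation with respect to the entry in row $j$, column $k$ acts only on $r_j$; since $H$ is quadratic in each entry one finds $\partial_i H = \sigma^{-2}m^{-1/2}(\xi_k-x_k)r_j$, $\partial_i^2 H = \sigma^{-2}m^{-1}(\xi_k-x_k)^2$ (bounded, as $|\xi_k-x_k|\le 2$), and crucially $\partial_i^3 H = 0$. Consequently $\partial_i^3 f_{\xi,z}$ is a finite combination of connected Gibbs correlations of $\partial_i H$ and $\partial_i^2 H$, hence a factor $m^{-3/2}$ times a polynomial of degree three in the Gibbs moments $\langle|r_j|\rangle,\langle|r_j|^2\rangle,\langle|r_j|^3\rangle$ of the local field.

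The main obstacle is precisely the control of these Gibbs moments. Unlike in Theorem~\ref{thm:sourav}, there is no uniform pointwise bound on $\partial_i^3 f_{\xi,z}$, because an adversarial configuration $x$ can make $|r_j|$ as large as $O(\sqrt n)$. The resolution I would pursue exploits the Gaussian weight in the Gibbs measure: each residual enters the Boltzmann factor through $e^{-r_j^2/(2\sigma^2)}$, so $|r_j|^p e^{-r_j^2/(2\sigma^2)}$ is uniformly bounded and large fields are exponentially suppressed, while the true-signal configuration $x=\xi$ always produces $r=\sigma z$ and thereby lower-bounds the partition function. Carrying out the remaining expectation over the noise $z$ (and, if necessary, over the entries) then controls the normalization and yields a bound on the integrated remainder of Theorem~\ref{thm:souravmodified} that is uniform in $n$, $i$, and $(\xi,z)$. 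This is the one step I expect to require genuine care.

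Finally I would assemble the estimate. With the per-entry remainder controlled by $C(\sigma,K)\,m^{-3/2}$ — the factor $m^{-3/2}$ coming from the three derivatives and the bounded field moments, and the third absolute moment $\bE[|A_{jk}|^3]\le(\bE A_{jk}^6)^{1/2}\le K^{1/2}$ being finite by the sixth-moment hypothesis — summing over all $mn$ entries gives
\[
\big|\bE_{\matA}[f_{\xi,z}(\matA)] - \bE_{\matB}[f_{\xi,z}(\matB)]\big| \le C(\sigma,K)\, m n\, m^{-3/2} = C(\sigma,K)\,\alpha^{-1/2}\, n^{1/2}.
\]
Integrating over $(\xi,z)$ preserves this bound, and the prefactor $1/n$ in \eqref{capacity} then gives $|\bE[C_n(m^{-1/2}\matA_n)] - \bE[C_n(m^{-1/2}\matB_n)]| = O(n^{-1/2})\to 0$, which is the claim.
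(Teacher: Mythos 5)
Your overall strategy coincides with the paper's: rewrite the capacity as an average of a log-partition function over the matrix entries, interpolate one entry at a time via Theorem~\ref{thm:souravmodified}, observe that matching first and second moments kill the $a_i$ and $b_i$ terms, and reduce everything to bounding $\partial_i^3 f$, which is a cumulant-type combination of Gibbs averages of the local residual field. The gap sits exactly at the step you yourself flag as requiring ``genuine care'', and the mechanism you propose there does not close. Pulling the pointwise bound $|r_j|^p e^{-r_j^2/(2\sigma^2)}\le C$ out of the numerator of the Gibbs average leaves you with $\sum_x |r_j(x)|^p e^{-H(x)}\le C\sum_x e^{-H_{\sim j}(x)}$, where $H_{\sim j}$ omits the $j$-th squared residual; you must then compare $\sum_x e^{-H_{\sim j}(x)}$ to the full partition function, and lower-bounding the latter by the single term $x=\xi$ costs a factor that can be exponential in $n$ (a sum of $2^n$ nonnegative terms against one of them). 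The paper's device at this point is a negative-correlation (Chebyshev) inequality: writing $\Theta_r(x)$ for the $r$-th rescaled residual and $\langle\cdot\rangle_{\sim r}$ for the Gibbs measure with the $r$-th row of the Hamiltonian removed, one has $\langle e^{-\Theta_r^2}\Theta_r^4\rangle_{\sim r}\le\langle e^{-\Theta_r^2}\rangle_{\sim r}\langle\Theta_r^4\rangle_{\sim r}$ because $e^{-t}$ and $t^2$ are oppositely monotone in $t=\Theta_r^2$. This makes the troublesome normalization $\langle e^{-\Theta_r^2}\rangle_{\sim r}$ cancel exactly, and $\E\langle\Theta_r^4\rangle_{\sim r}=O(1)$ then follows because the entries of row $r$ are independent of the reduced Gibbs measure. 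You need this (or an equivalent decoupling) to make your step go through.

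A second, smaller problem: your final accounting treats the per-entry third derivative as bounded by $Cm^{-3/2}$ uniformly, so that only third moments of the entries appear. That cannot be right, because the residual $r_j$ contains the term $m^{-1/2}s(\xi_k-x_k)$ where $s$ is the very entry being varied along the Taylor segment, so the Gibbs moments of $|r_j|$, and hence $|\partial_i^3 f|$, grow like $|s|^3$. This is precisely why Theorem~\ref{thm:souravmodified}, with its explicit remainder integrals, is needed instead of Theorem~\ref{thm:sourav}: the correct bound has the form $K(1+|s|^3)m^{-3/2}$, the remainder integral $\int_0^{U_i}(1+|s|^3)(U_i-s)^2\,\de s$ then produces moments of the entries up to order six, and this is where the sixth-moment hypothesis of ``standard type'' is actually consumed (not merely, as in your write-up, to guarantee a finite third moment). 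The arithmetic still yields $O(n^{-1/2})$ after summing over the $mn$ entries and dividing by $n$, so the conclusion survives, but as written your estimate skips the step that justifies it.
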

The above theorem establishes that the per-user capacity of a CDMA 
channel is asymptotically independent 
of the distribution of the spreading sequences. 
The conditions required to be satisfied
by the distributions are milder than the ones imposed in \cite{KoM08}.

Our next result concerns the sparse-dense equivalence.
Sparse signature schemes were proposed in  \cite{MoT06itw}
both as a tool for simplifying mathematical analysis and 
as a design option with potential practical advantages.
Given a signature matrix $\matA=\{A_{ij}\}_{1\le i\le m,
1\le j\le n}$ defined as above, 
its sparsification $\matA^{\gamma}_n$ is given by
\begin{eqnarray}
A^{\gamma}_{ij} =\left\{\begin{array}{ll}
A_{ij} & \mbox{with probability\; $\gamma/n$,}\\
0  & \mbox{with probability\; $1-\gamma/n$,}
\end{array}\right.
\end{eqnarray}
with $\gamma>0$ a design parameter that is kept fixed in the large system 
limit.
Under a sparse signature scheme, the power per symbol is normalized to
$1$ if we rescale the signatures by a factor $1/\sqrt{\gamma}$.
The channel output is therefore
\begin{equation*}
Y_i= \frac{1}{\sqrt{\gamma}}\sum_{k=1}^n A_{ik}^{\gamma}\, X_k +  \sigma\, Z_i
\, .
\end{equation*}
We can then prove the following sparse-dense equivalence result.
\begin{theorem}[Sparse-Dense Equivalence for CDMA channels]
\label{thm:cdmasparsedense}
Let $\matA_n = \{A_{ij}\}_{1\le i\le m,1\le j\le n}$ and 
$\matB_n = \{B_{ij}\}_{1\le i\le m,1\le j\le n}$ denote two 
$m\times n$ dimensional random  spreading matrices  of standard type.
For $\gamma>0$, let $\matA_n^{\gamma}$ be the sparsification of 
$\matA_n$.
 Then
\begin{align*}
\lim_{\gamma\to\infty}\lim_{n\to\infty,m=n\alpha}
\big\{\bE[C_n(\gamma^{-1/2}\matA^\gamma_n)] -\bE[C_n(n^{-1/2}\matB_n)]\big\} = 0 .
\end{align*} 
\end{theorem}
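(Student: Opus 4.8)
The plan is to apply Theorem~\ref{thm:souravmodified} to the free energy regarded as a function of the $mn$ entries of the (rescaled) signature matrix, exploiting the fact that the sparse and dense ensembles have matched first and second moments, so that only the third-order remainder survives. Substituting $Y=GX^{\inp}+\sigma Z$ into the capacity~\eqref{capacity}, I set
\begin{align*}
F(g)=\bE_{X^{\inp},Z}\Big[\log\sum_{x\in\{+1,-1\}^n}\exp\Big(-\frac{1}{2\sigma^2}\big\|G(X^{\inp}-x)+\sigma Z\big\|_2^2\Big)\Big],
\end{align*}
so that $C_n(G)=\log 2-\frac12\alpha-\frac1n F(G)$ with $F$ a deterministic $C^\infty$ function of the entries $g=(g_{ik})\in\reals^{m\times n}$. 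I would then take the two vectors of independent coordinates $U=\{\gamma^{-1/2}A^\gamma_{ij}\}$ and $V=\{n^{-1/2}B_{ij}\}$. A direct computation gives $\bE[U_{ij}]=\bE[V_{ij}]=0$ and $\bE[U_{ij}^2]=\bE[V_{ij}^2]=1/n$, so $a_{ij}=b_{ij}=0$ for every $(i,j)$ and the bound of Theorem~\ref{thm:souravmodified} collapses onto its two third-derivative remainder terms.

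The heart of the matter is a bound $|\partial^3_{ij}F(g)|\le C(\sigma)$ that is uniform in $g$. Writing $S_i(x)=\sum_k g_{ik}(X^{\inp}_k-x_k)+\sigma Z_i$ for the $i$-th residual and $w_j=X^{\inp}_j-x_j\in\{-2,0,2\}$, one finds $\partial_{ij}\log Z=-\sigma^{-2}\langle S_iw_j\rangle$, where $\langle\,\cdot\,\rangle$ denotes the Gibbs average with weight proportional to $\exp(-\frac{1}{2\sigma^2}\|S(x)\|_2^2)$. Since the exponent is quadratic in $g_{ij}$, one has $\partial^3_{ij}\log Z=3\,\mathrm{Cov}(\phi_{ij},\partial_{ij}\phi_{ij})+\kappa_3(\phi_{ij})$ with $\phi_{ij}=-\sigma^{-2}S_iw_j$ and $\partial_{ij}\phi_{ij}=-\sigma^{-2}w_j^2$; using $|w_j|\le 2$ this yields the pointwise estimate $|\partial^3_{ij}\log Z|\le C\sigma^{-6}(1+\langle|S_i|^3\rangle)$, and, $F$ being smooth, the derivative passes under $\bE_{X^{\inp},Z}$.

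The step I expect to be the main obstacle is controlling $\langle|S_i|^3\rangle$: the local field $S_i$ is unbounded and a direct estimate of the Gibbs average is hopeless, since the measure can concentrate on configurations with large residual. The resolution is to observe that, for every fixed $G$, the measure $\langle\,\cdot\,\rangle$ is exactly the posterior of $X^{\inp}$ given $Y=GX^{\inp}+\sigma Z$ under the uniform prior on $\{+1,-1\}^n$. The Nishimori identity then applies: conditionally on $Y$, a Gibbs sample $x$ and the planted signal $X^{\inp}$ are exchangeable, whence
\begin{align*}
\bE_{X^{\inp},Z}\big\langle|S_i|^3\big\rangle=\bE_{X^{\inp},Z}\big\langle|Y_i-(Gx)_i|^3\big\rangle=\bE_{X^{\inp},Z}\big[|Y_i-(GX^{\inp})_i|^3\big]=\sigma^3\,\bE\big[|Z_i|^3\big].
\end{align*}
This is a universal constant, independent of $G$, $n$ and $\gamma$; hence $\bE_{X^{\inp},Z}|\partial^3_{ij}\log Z|\le C(\sigma)$, and this bound holds in particular at the mixed configurations $(U_1^{ij-1},s,V_{ij+1}^{mn})$ at which Theorem~\ref{thm:souravmodified} evaluates the third derivative.

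It then remains to assemble the estimate. With $a_{ij}=b_{ij}=0$ and $|\partial^3_{ij}F|\le C(\sigma)$, integrating the remainder terms of Theorem~\ref{thm:souravmodified} gives
\begin{align*}
\big|\bE[F(U)]-\bE[F(V)]\big|\le\frac{C(\sigma)}{6}\sum_{i,j}\big(\bE|U_{ij}|^3+\bE|V_{ij}|^3\big).
\end{align*}
Using the standard-type bound $\bE|A_{ij}|^3,\bE|B_{ij}|^3\le\sqrt K$, so that $\bE|V_{ij}|^3\le\sqrt K\,n^{-3/2}$ and $\bE|U_{ij}|^3=\gamma^{-1/2}n^{-1}\bE|A_{ij}|^3\le\sqrt K\,\gamma^{-1/2}n^{-1}$, and summing over the $mn=\alpha n^2$ entries, I obtain
\begin{align*}
\big|\bE[C_n(\gamma^{-1/2}\matA^\gamma_n)]-\bE[C_n(n^{-1/2}\matB_n)]\big|=\frac1n\big|\bE[F(U)]-\bE[F(V)]\big|\le C'(\sigma)\,\alpha\sqrt K\Big(\frac{1}{\sqrt n}+\frac{1}{\sqrt\gamma}\Big).
\end{align*}
Letting $n\to\infty$ removes the first term and then letting $\gamma\to\infty$ removes the second, which is the claim. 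I note that the sixth-moment hypothesis is used only to bound the third absolute moments; the decisive feature is that the Nishimori identity renders the third-derivative estimate entirely independent of the spreading distribution, which is precisely why the same mechanism controls both the universality statement of Theorem~\ref{thm:cdmauniversal} and the present sparse limit.
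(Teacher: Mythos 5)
Your proof is correct, and its skeleton --- reduce the capacity to a free energy of the matrix entries, note that the first two moments of $\gamma^{-1/2}A^{\gamma}_{ij}$ and $n^{-1/2}B_{ij}$ coincide so that $a_{ij}=b_{ij}=0$, then control the third-order remainder in Theorem~\ref{thm:souravmodified} --- is the same as the paper's. The key technical step, however, is genuinely different. The paper bounds $\bE\langle|\partial_{rc}\ham|^3\rangle$ by a cavity argument: it strips row $r$ from the Hamiltonian so that $\{D_{rj}\}_j$ decouple from the Gibbs measure, applies the negative-correlation inequality $\langle e^{-\Theta_r^2}\Theta_r^4\rangle_{\sim r}\le\langle e^{-\Theta_r^2}\rangle_{\sim r}\langle\Theta_r^4\rangle_{\sim r}$, and expands $\bE\langle\Theta_r^4\rangle_{\sim r}$ by hand; the resulting bound $K(1+|s|^3)/n$ grows in the interpolation variable $s$, and integrating $(1+|s|^3)(U-s)^2$ against the Taylor kernel is exactly what forces the sixth-moment hypothesis. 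You instead keep the average over the planted signal $X^{\inp}$ inside the function and exploit that the Gibbs measure is the exact posterior of $X^{\inp}$ given $Y$, so the Nishimori identity yields $\bE_{X^{\inp},Z}\langle|S_i|^3\rangle=\sigma^3\bE|Z_i|^3$ for \emph{every} fixed matrix, including the mixed interpolated ones; hence $|\partial^3_{ij}F(g)|\le C(\sigma)$ uniformly in $g$. This buys a shorter argument, requires only finite third absolute moments of the entries, and would even allow a direct application of Theorem~\ref{thm:sourav} with $L_3=C(\sigma)/n$. What it costs is portability: the Nishimori identity needs the Gibbs measure to be a posterior of the actual generative model, which holds for the CDMA mutual information but fails for the finite-temperature free energy used in Section~\ref{sec:lassoproof} (deterministic $x_0$, a free inverse temperature $\beta$, and an $\ell_1$ penalty not matched to any prior); the paper's cavity estimate is precisely the one that is recycled there. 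So your closing remark overreaches slightly --- the Nishimori mechanism explains why \emph{your} third-derivative bound is distribution-free, but it is not the paper's mechanism and it is not available for Theorem~\ref{thm:lassouniversal}.
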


As already mentioned, establishing sparse-dense equivalence is
particularly useful when the analysis of a sparse system is 
simpler than for its dense counterpart.
In \cite{MoT06itw} it was shown that there exists 
$\alpha_{\rm s}>0$ such that, for all $\alpha \leq \alpha_s$,  
\begin{align}\label{eqn:MoTformula}
\lim_{\gamma\to\infty}\lim_{n\to\infty,m=n\alpha}\bE[C_n(\gamma^{-1/2}\matA_n^{\gamma})]
= \min_{m\in[0,1]}C_{\rm RS}(q)\, ,
\end{align}
where 
\begin{eqnarray}
C_{\rm RS}(q) & = &
\frac{\lambda}{2} (1+q) -\frac{1}{2\alpha}\log \lambda\sigma^2 -
{\sf E}_z\{ 
\log(2\cosh(\sqrt{\lambda} Z+\lambda))\}
\, ,\\
\lambda & \equiv &\frac{1}{\sigma^2+\alpha(1-q)}\, ,
\end{eqnarray}
where ${\sf E}_z$ denotes expectation with respect to $Z\sim \normal(0,1)$.
The parameter $\alpha_{\rm s}$ is defined as
the largest $\alpha$ such that the maximizer in 
\eqref{eqn:MoTformula} is unique. Numerically $\alpha_{\rm s}\approx 1.49$. 
The same formula was derived earlier by Tanaka \cite{Tanaka}
using the non-rigorous replica method from statistical physics.

Combining this with
Theorem~\ref{thm:cdmasparsedense} we can conclude the following result for the
capacity of a random CDMA system.
\begin{corollary}[Capacity of random CDMA systems]
Let $\matA_n$ denote an $m\times n$ dimensional random spreading
matrix with i.i.d. entries.
Assume $\bE[A_{ij}] = 0$, $\bE[A_{ij}^2] = 1$ and $\bE[A_{ij}^6]\leq K <\infty$. 
Then for $\alpha\leq\alpha_{\rm s}$
\begin{align*}
\lim_{n\to\infty,m=n\alpha}\E[C_n(m^{-1/2}\matA_n)] = 
\min_{q\in[0,1]}C_{\rm RS}(q).
\end{align*} 
\end{corollary}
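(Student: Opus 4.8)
The plan is to obtain the corollary by concatenating the three facts already available for the CDMA capacity: the replica-symmetric formula \eqref{eqn:MoTformula} from \cite{MoT06itw}, the sparse-dense equivalence of Theorem~\ref{thm:cdmasparsedense}, and (as a conceptual shortcut) the universality of Theorem~\ref{thm:cdmauniversal}. The first observation is that the hypotheses of the corollary, namely $\bE[A_{ij}]=0$, $\bE[A_{ij}^2]=1$ and $\bE[A_{ij}^6]\le K$, are precisely the defining properties of a matrix of standard type, so the target matrix $\matA_n$ is an admissible choice in both Theorems~\ref{thm:cdmauniversal} and \ref{thm:cdmasparsedense}. Note also that the restriction $\alpha\le\alpha_{\rm s}$ enters only through \eqref{eqn:MoTformula}, which is the sole place where the value $\min_{q\in[0,1]}C_{\rm RS}(q)$ is produced; the sparse-dense and universality ingredients hold for every $\alpha$.

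First I would fix a reference matrix $\matB_n$ of standard type, namely the sparse signature ensemble whose sparsification $\matB_n^{\gamma}$ is analyzed in \cite{MoT06itw}, so that by \eqref{eqn:MoTformula}, for $\alpha\le\alpha_{\rm s}$,
$$\lim_{\gamma\to\infty}\lim_{n\to\infty,\,m=n\alpha}\bE[C_n(\gamma^{-1/2}\matB_n^{\gamma})]=\min_{q\in[0,1]}C_{\rm RS}(q).$$
Next I would apply Theorem~\ref{thm:cdmasparsedense}, taking the matrix that gets sparsified to be this $\matB_n$ and the dense matrix in the comparison to be the target $\matA_n$ (both of standard type), which gives
$$\lim_{\gamma\to\infty}\lim_{n\to\infty,\,m=n\alpha}\big\{\bE[C_n(\gamma^{-1/2}\matB_n^{\gamma})]-\bE[C_n(m^{-1/2}\matA_n)]\big\}=0.$$

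The final step is to combine the two displays. For each fixed $\gamma$ the inner limit $\lim_n\bE[C_n(\gamma^{-1/2}\matB_n^{\gamma})]$ exists by \eqref{eqn:MoTformula}, and the inner limit of the bracketed difference exists by Theorem~\ref{thm:cdmasparsedense}; subtracting these two convergent sequences shows that $\lim_n\bE[C_n(m^{-1/2}\matA_n)]$ itself exists and, being the left-hand object, does not depend on $\gamma$. Letting $\gamma\to\infty$ and using the two outer limits above then identifies this common value with $\min_{q\in[0,1]}C_{\rm RS}(q)$. If instead one prefers to keep the comparison ensemble equal in law to $\matA_n$ throughout, Theorem~\ref{thm:cdmauniversal} can be invoked first to replace $\matA_n$ by the reference ensemble without altering the limit; this is the role that the universality statement plays in the argument.

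The main obstacle I anticipate is entirely in the bookkeeping of the iterated limits rather than in any new estimate. Theorem~\ref{thm:cdmasparsedense} controls only the limit of a \emph{difference}, so one must argue separately that the individual sequence $\bE[C_n(m^{-1/2}\matA_n)]$ converges before its value can be read off; this is exactly what the existence of $\lim_n\bE[C_n(\gamma^{-1/2}\matB_n^{\gamma})]$ for each fixed $\gamma$ supplies. Once this convergence is secured, the content of the corollary lies purely in the combination of the three cited results, and no further analytic work is needed.
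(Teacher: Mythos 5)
Your proposal is correct and follows exactly the route the paper intends: the paper derives the corollary by combining the replica formula \eqref{eqn:MoTformula} for the sparse ensemble with the sparse--dense equivalence of Theorem~\ref{thm:cdmasparsedense}, and your careful handling of the iterated limits (using the existence of $\lim_n\bE[C_n(\gamma^{-1/2}\matB_n^{\gamma})]$ for fixed $\gamma$ to extract convergence of $\bE[C_n(m^{-1/2}\matA_n)]$, which is $\gamma$-independent) just makes explicit the bookkeeping the paper leaves implicit. Your observation that Theorem~\ref{thm:cdmauniversal} is dispensable, and that $\alpha\le\alpha_{\rm s}$ enters only through \eqref{eqn:MoTformula}, is also accurate.
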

%
%
\subsection{Estimation via LASSO}\label{sec:lasso}

The LASSO (also known as basis pursuit de-noising) is 
a popular strategy in statistical learning, used for
reconstructing high-dimensional parameter vectors from noisy measurements
\cite{Tibshirani,ChenDonoho}. 
It is particularly well suited when the underlying parameters 
vector is sparse in an appropriate basis. For this very reason,
it is object of intense study within the compressed sensing literature.

We assume here that a signal $x_0\in\reals^n$ is observed through
the sensing matrix $A_n$ which has dimensions 
$m \times n$. The measurements $y\in \reals^m$ are modeled as a noisy 
linear functions
\begin{eqnarray} 
y = A_n \,x_0 + z\, ,
\end{eqnarray}
with $z\in\reals^m$ a noise vector.
 Let the noise vector $z$ be i.i.d. 
Gaussian vector. The recovery of $x_0$ from $y$ is done using the
following convex optimization problem
\begin{align}
\hx(\lambda) = \argmin_{x\in\reals^n} \Big\{\frac{1}{2}\Vert y - A_n 
x\Vert_2^2+\lambda\, \Vert x\Vert_1 \Big\}\, .
\end{align}
For some applications the sensing matrix $A_n$
is not far from random or pseudo-random. It is important to
ask to which degree results obtained for a specific distribution 
of $A_n$ generalize to other distributions \cite{DoT09,DoTUniversality}.
We consider the case in which the entries $x_{0,i}$
of $x_0$  are uniformly bounded, i.e.,
$|x_{0i}| \leq x_{\rm max}$ 
for some constant $x_{\rm max} > 0$ independent of $n,m$. 
We further assume that the noise vector $z$ has i.i.d. 
entries $z_i\sim\normal(0,\sigma^2)$ and focus on the limit
$m,n\to\infty$ with $m/n=\alpha$ fixed.

The next result provides rigorous evidence towards the broader
universality picture, by proving universality for the 
normalized cost
\begin{eqnarray}
L(A_n) = \frac{1}{n}\, \min_{x\in[-x_{\rm max},x_{\rm max}]^n}\,
\Big\{\frac{1}{2}\Vert y 
- A_n x\Vert_2^2+\lambda\, \Vert x\Vert_1 \Big\}\, .
\end{eqnarray}
\begin{theorem}[Universality for LASSO]\label{thm:lassouniversal}
Let $\matA_n = \{A_{ij}\}_{1\le i\le m,1\le j\le n}$ and 
$\matB_n = \{B_{ij}\}_{1\le i\le m,1\le j\le n}$ denote two 
$m\times n$ dimensional random  sensing matrices  of standard type. 
Then
\begin{align*}
\lim_{n\to\infty,\, m=n\alpha} 
\big\{\bE[L(n^{-1/2}\matA_n)]-\bE[L(n^{-1/2}\matB_n)]\big\}=0\, .
\end{align*}
\end{theorem}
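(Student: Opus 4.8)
The plan is to apply Theorem~\ref{thm:souravmodified} with $f$ taken to be the normalized LASSO cost, viewed as a function of the $mn$ entries of the sensing matrix. More precisely, fix the signal $x_0$, the noise $z$, and hence $y = n^{-1/2}\matA_n x_0 + z$; then $L$ depends on the matrix entries $\{A_{ij}\}$ both through $y$ and through the quadratic term. I would index the $mn$ independent coordinates by the pairs $(i,j)$, set $U_{ij} = n^{-1/2} A_{ij}$ and $V_{ij} = n^{-1/2} B_{ij}$, and regard the inner minimization as defining $f(\{U_{ij}\})$. Since both matrices are of standard type, after the $n^{-1/2}$ rescaling we have $\bE[U_{ij}] = \bE[V_{ij}] = 0$ and $\bE[U_{ij}^2] = \bE[V_{ij}^2] = 1/n$, so the first-moment and second-moment differences $a_{ij}$ and $b_{ij}$ both vanish exactly. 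Consequently the only surviving contributions in the bound of Theorem~\ref{thm:souravmodified} are the two third-order remainder integrals.

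The core difficulty, and the reason the smooth version Theorem~\ref{thm:sourav} cannot be applied directly, is that $L$ is defined through a non-smooth optimization: the objective contains the $\ell_1$ penalty, the argmin need not be differentiable in the entries, and in any case third derivatives of a min are not controlled pointwise. My first step would therefore be to smooth the problem. I would replace the hard constraint set $[-x_{\rm max},x_{\rm max}]^n$ and the $\ell_1$ term by a smoothed surrogate, or equivalently add a small strongly convex regularizer, so that the minimizer is unique and the value function becomes thrice continuously differentiable in the matrix entries, with an explicit error bounded independently of the distribution. Alternatively, one exploits the envelope theorem: because $\hx$ is an optimizer, the first derivative of the value function in $A_{ij}$ equals the partial derivative of the objective evaluated at the fixed optimum $\hx$, which is a clean expression not requiring differentiation of $\hx$ itself. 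This is the standard trick that makes value functions of convex programs as smooth as their objectives in the parameters.

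Granting the smoothing, I would then bound the relevant partial derivatives. Writing $r = y - n^{-1/2}\matA_n x$ for the residual at the optimum, the first partial derivative in $A_{ij}$ is proportional to $n^{-1/2}\, r_i\, \hx_j$, the second to $n^{-1}\, \hx_j^2$ plus sensitivity terms, and the third involves $n^{-3/2}$ times products of residual and optimizer coordinates together with derivatives of $\hx$ with respect to $A_{ij}$. The key quantitative inputs I would establish are a priori bounds: the optimizer satisfies $\|\hx\|_\infty \le x_{\rm max}$ by constraint, and the residual has norm controlled with high probability because $L$ is bounded by its value at $x = 0$, namely $\tfrac{1}{2n}\|y\|_2^2$. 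These yield a uniform $O(n^{-3/2})$ bound on each third-order term. Plugging into Theorem~\ref{thm:souravmodified}, the $mn = \alpha n^2$ summands each contribute a third-order remainder of order $n^{-3/2} M_3$, where the finite sixth-moment hypothesis furnishes the requisite third-moment bound $M_3$ after rescaling (indeed $\bE|U_{ij}|^3 = O(n^{-3/2})$). Summing gives a total of order $\alpha n^2 \cdot n^{-3/2} = O(n^{1/2})$, which is not yet summable; the main obstacle is therefore to gain the extra factor.

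The extra factor must come from the sensitivity of the optimizer: $\partial \hx / \partial A_{ij}$ is itself small, of order $n^{-1/2}$ in an appropriate aggregate sense, because perturbing a single rescaled entry changes the problem by $O(n^{-1/2})$. I expect the hard part of the argument to be making this sensitivity bound precise and uniform in the matrix distribution, controlling the operator norm or restricted eigenvalues of $n^{-1}\matA_n^{\!\top}\matA_n$ so that the implicit-function derivative of the argmin is genuinely $O(n^{-1/2})$ with high probability. Once each third derivative is shown to be $O(n^{-2})$ rather than $O(n^{-3/2})$, the total bound becomes $O(\alpha n^2 \cdot n^{-2}) = O(1)$ and, with a slightly sharper accounting of the residual tail and a truncation argument to handle the events of small probability where norms are large, one obtains a bound that vanishes as $n \to \infty$. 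The final step is a routine truncation to pass from the high-probability control to control of the full expectation $\bE[L]$, using the uniform integrability supplied by the sixth-moment assumption.
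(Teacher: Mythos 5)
Your high-level instinct is right --- smooth the value function, apply Theorem~\ref{thm:souravmodified} with $a_{ij}=b_{ij}=0$ so that only the third-order remainders survive --- but the two steps you leave as ``granting the smoothing'' and ``gain the extra factor'' are precisely where the proof lives, and your proposed routes through them do not work. Adding a small strongly convex regularizer makes the minimizer unique and (via the envelope theorem) gives you \emph{one} derivative of the value function; it does not give you three. The second derivative requires $\partial\hx/\partial A_{ij}$, and with the $\ell_1$ penalty and the box constraint still present the argmin is only piecewise smooth in the data: its derivative jumps whenever the active set or the sign pattern changes, so $\partial^3_{ij}L$ simply does not exist pointwise, and no implicit-function or restricted-eigenvalue argument will manufacture it. The paper's fix is concrete and different: first discretize the feasible set to a grid $\cX_\delta^n$ (paying $C\delta$, controlled via Seginer's bound on $\bE[\sigma_{\max}(n^{-1/2}\matA_n)^2]$), then replace the min over the grid by the finite-temperature free energy $f(\delta,\beta,z,A_n)=-\frac{1}{\beta n}\log\sum_{x\in\cX_\delta^n}e^{-\beta\ham(x,z,A_n)}$. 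This log-sum-exp surrogate is smooth, its derivatives are explicit Gibbs averages bounded exactly as in the CDMA proof, and the relaxation error is controlled \emph{uniformly in the matrix distribution} by the entropy bound $|\partial f/\partial\beta|\le \beta^{-2}\log(2/\delta)$, so one can send $n\to\infty$, then $\beta\to\infty$, then $\delta\to 0$. Your ``smoothed surrogate with an explicit error bounded independently of the distribution'' is a placeholder for exactly this construction, which you never supply.

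The quantitative accounting also falls short of the target. Even granting your hoped-for sensitivity gain, you arrive at a total bound of $O(1)$, which does not vanish; the subsequent appeal to ``a slightly sharper accounting of the residual tail'' is not an argument. The correct bookkeeping, visible in the paper's CDMA computation that the LASSO proof reuses, is that the third derivative of the $\tfrac1n$-normalized free energy with respect to a single rescaled entry satisfies $\bE[|\partial^3_{rc}f|]\le \frac{K}{n}(1+|s|^3)$ --- the $1/n$ comes from the overall normalization of the value function, not from any smallness of $\partial\hx/\partial A_{ij}$ --- and combining this with $\bE|U_{ij}|^k=O(n^{-k/2})$ for $3\le k\le 6$ (this is where the sixth moment is used, since the remainder integral produces moments up to order six) the sum over the $mn$ coordinates is $O(n^{-1/2})$, not $O(1)$. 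Note also that this derivative bound depends on the interpolating value $s$ and is only controlled in expectation, which is exactly why Theorem~\ref{thm:souravmodified} rather than Theorem~\ref{thm:sourav} is needed; you correctly sensed this but did not connect it to the form your derivative estimates must take.
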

%
%
\subsection{Spectrum of Wishart matrices and capacity of MIMO channels}
\label{sec:wishart}

Given an $n\times n$  symmetric matrix $W_n$, let 
$\{\lambda_i(W_n)\}_{1\le i\le n}$ denote its eigenvalues. The
spectral measure of $W_n$ is the probability measure
\begin{eqnarray}
\mu_n \equiv \frac{1}{n}\sum_{i=1}^{n}\delta_{\lambda_i(W_n)}\, .
\end{eqnarray} 
The study of the limit of $\mu_n$ as $n\to\infty$, for a sequence
of random matrices $\matW_n$ is a central topic in random matrix theory,
with important applications in multi-antenna communications.
A well-studied example is the family of Wishart matrices. 
Here, $\matW_m = \frac{1}{n}\matA_n^{\top}\matA_n$, where
$\matA_n$ is an $m\times n$ matrix, whose entries are i.i.d. realizations of a
zero mean random variable with variance $1$.

A standard approach to characterizing the spectral measure 
is through its Stieltjes transform \cite{Guionnet}
which is defined as
\begin{align*}
S_n(\matW_n,z) = \frac{1}{n}\sum_{i=1}^n\frac{1}{z+\lambda_i(\matW_n)} =
\frac{1}{n}\trace\big((\matW_n+zI_n)^{-1}\big),
\end{align*}
where $z\in\complex\backslash\reals$ and $I_n$ is 
the $n$-dimensional identity
matrix.
The limiting spectrum of the family $\{\matW_n\}_{n\ge 1}$ 
can be obtained by computing
$\lim_{n\to\infty}S_n(\matW_n,z)$. The universality of Wishart matrices 
is a well known result 
\cite{MarPas}. The following is a sparse-dense equivalence
result for this class of matrices.
\begin{theorem}[Sparse-Dense Equivalence for Wishart Matrices]
\label{thm:wishartsparsedense}
Let $\matA_n = \{A_{ij}\}_{1\le i\le m,1\le j\le n}$ and 
$\matB_n = \{B_{ij}\}_{1\le i\le m,1\le j\le n}$ denote two 
$m\times n$ dimensional random  matrices  of standard type.
For $\gamma>0$, let $\matA_n^{\gamma}$ be the sparsification of 
$\matA_n$. Let $\matW_{\matA,n}^{\gamma}\equiv \gamma^{-1}
(\matA^\gamma_n)^{\top}\matA^\gamma_n$ and 
$\matW_{\matB,n}\equiv n^{-1}
(\matB_n)^{\top}\matB_n$. 
Then for all $z\in\complex\backslash\reals$
\begin{align*}
\lim_{\gamma\to\infty}\lim_{n\to\infty,m=n\alpha}
\big\{\bE[S_n(\matW_{\matA,n}^{\gamma},z)] -\bE[S_n(\matW_{\matB,n},z)]\big\} = 0\, .
\end{align*} 
\end{theorem}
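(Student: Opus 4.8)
The plan is to apply the modified Lindeberg principle, Theorem~\ref{thm:souravmodified}, to the Stieltjes transform regarded as a function of the matrix entries, comparing the rescaled sparse matrix with the rescaled dense one. Write $\matC\equiv\gamma^{-1/2}\matA_n^\gamma$ and $\matD\equiv n^{-1/2}\matB_n$, so that $\matW_{\matA,n}^\gamma=\matC^\top\matC$ and $\matW_{\matB,n}=\matD^\top\matD$, and take $U$ and $V$ to be the $mn$ entries of $\matC$ and $\matD$ respectively. A one-line computation shows that the first two moments match exactly: $\bE[C_{ij}]=\bE[D_{ij}]=0$, and since $\bE[(A_{ij}^\gamma)^2]=(\gamma/n)\bE[A_{ij}^2]=\gamma/n$, the factor $\gamma^{-1}$ in $\matW_{\matA,n}^\gamma$ restores $\bE[C_{ij}^2]=1/n=\bE[D_{ij}^2]$. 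Hence in Theorem~\ref{thm:souravmodified} we have $a_i=b_i=0$, and only the two third-order integral terms remain. The test function is $f=S_n(\,\cdot\,,z)=\tfrac1n\trace((\,\cdot\,+zI_n)^{-1})$, applied separately to its real and imaginary parts; these are smooth in the real entries because $\matW+zI_n$ is invertible for every $z\in\complex\setminus\reals$ (the Gram matrix $\matW$ being positive semidefinite, $|\lambda_i+z|\ge|\mathrm{Im}\,z|$), which also gives the uniform resolvent bound $\|(\matW+zI_n)^{-1}\|_{\mathrm{op}}\le|\mathrm{Im}\,z|^{-1}$.

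The heart of the argument is a bound of order $1/n$ on the third derivative of $S_n$ with respect to a single entry. Fix an entry $w$ lying in row $i$ and column $j$ of the matrix, write $G=(\matW+zI_n)^{-1}$, and let $v\in\reals^n$ be that row. Because $\matW$ depends on $w$ only through this row, $\partial_w\matW=e_jv^\top+ve_j^\top$ and $\partial_w^2\matW=2e_je_j^\top$ have rank at most two and $\partial_w^3\matW=0$; differentiating $G$ through $\partial_w G=-G(\partial_w\matW)G$ and iterating, $\partial_w^3 G$ is a finite sum of alternating products of copies of $G$ with these low-rank matrices. The key point is that each such product contains at least one rank-$\le2$ factor, so its trace collapses to a bounded number of scalar contractions $x^\top G^k y$ with $x,y\in\{e_j,v\}$; each of these is at most $|\mathrm{Im}\,z|^{-k}\|v\|^2$, so $\trace(\partial_w^3 G)$ is $O(1)$ in $n$ rather than the naive $O(n)$. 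This yields
\begin{align*}
|\partial_w^3 S_n|=\frac1n\big|\trace(\partial_w^3 G)\big|\le \frac{C(z)}{n}\,\big(1+\|v\|\big)^{3},
\end{align*}
and the same bound holds along the integration path $s\in[0,w]$ after replacing $\|v\|$ by $\|v\|+|w|$, since only the $j$-th coordinate of $v$ varies. Establishing this $1/n$ gain --- produced jointly by the normalized trace and the low-rank nature of a single-entry perturbation --- is the main obstacle; once it is in place the rest is bookkeeping.

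Substituting this bound into the two integral terms of Theorem~\ref{thm:souravmodified} and performing the inner $s$-integral produces, for each entry, a contribution bounded by $\tfrac{C(z)}{n}\,\bE\big[(1+\|v_{\setminus j}\|+|U_i|)^{3}\,|U_i|^{3}\big]$ together with the analogous term in $V_i$, where $U_i$ and the remaining entries $v_{\setminus j}$ of its row are independent. Expanding the bracket, the expectations that appear are moments of the row norm $\|v\|^2=\sum_l v_l^2$ and absolute moments of a single entry up to order six (the term $|U_i|^3\cdot|U_i|^3$); all of these are finite and bounded uniformly in $n$, which is exactly where the standard-type assumption $\bE[A_{ij}^6],\bE[B_{ij}^6]\le K$ is used. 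The slowest-decaying contribution comes from $\bE[|U_i|^3]$, for which $\bE[|C_{ij}|^3]=\gamma^{-1/2}n^{-1}\bE[|A_{ij}|^3]$ and $\bE[|D_{ij}|^3]=n^{-3/2}\bE[|B_{ij}|^3]$. Summing over all $mn=\alpha n^2$ entries, the factor $n^{-1}$ from the derivative bound combines with the $n^2$ entries and these per-entry moments to give
\begin{align*}
\big|\bE[S_n(\matW_{\matA,n}^\gamma,z)]-\bE[S_n(\matW_{\matB,n},z)]\big|\le C'(z,K,\alpha)\,\big(\gamma^{-1/2}+n^{-1/2}\big).
\end{align*}
Sending $n\to\infty$ removes the second term and then $\gamma\to\infty$ removes the first, which is the asserted double limit. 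The structure mirrors the proofs of Theorems~\ref{thm:cdmauniversal}--\ref{thm:lassouniversal}, the present case being the cleanest thanks to the uniform bound $|\mathrm{Im}\,z|^{-1}$ on the resolvent.
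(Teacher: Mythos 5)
Your proposal is correct and follows essentially the same route as the paper: apply Theorem~\ref{thm:souravmodified} entry by entry with $a_i=b_i=0$ (matching first and second moments after rescaling), establish an $O(1/n)$ bound on $\partial^3_{ij}S_n$ at the interpolated matrix using $\partial^3_{ij}(A^\top A)=0$ and the resolvent bound $\|R\|_2\le |\mathrm{Im}\,z|^{-1}$, and sum the third-order remainder terms to get $O(\gamma^{-1/2}+n^{-1/2})$. The only cosmetic difference is that you extract the $O(1)$ trace bound by collapsing the rank-$\le 2$ perturbation to scalar contractions $x^\top G^k y$, whereas the paper reaches the same estimate via Cauchy--Schwarz with Frobenius norms of $\partial_{ij}(A^\top A)$; these are the same observation in different clothing (and your identification of the relevant vector as the $i$-th row of $A$ is in fact the correct indexing).
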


Under appropriate tightness conditions, 
convergence of Stieltjes transforms implies 
weak convergence of the spectrum $\mu_n$,
which further implies the convergence of the empirical average
$\frac{1}{n}\sum_i f(\lambda_i)$ for any continuous bounded function $f$.
As a particular application of this remark, we consider
the capacity of multi-input multi-output
(MIMO) communication systems. The channel model is very similar to the CDMA
system discussed in Section~\ref{sec:cdma}. 
For a channel input $X=(X_1,\dots,X_n)$, the channel output 
is a vector $Y = (Y_1,\dots,Y_m)$ in $\reals^m$, with components
\begin{align*}
Y_i = \sum_{k=1}^n H_{ik} X_k + \sigma\, Z_i
\end{align*}
where $Z_i$ are i.i.d. realizations of $\normal(0,1)$.
However, in this case it is customary to not restrict the inputs
to be $\{+1,-1\}$, but rather to impose a power constraint
$n^{-1}\sum_{i=1}^n\E\{X_i^2\}\le 1$.
Given a channel gains matrix $H_n=\{H_{ij}\}_{1\le i\le m,1\le j\le n}$,
the average capacity per input antenna \cite{Tel99} is then given by
\begin{align*}
C_n(H_n) = \max_{\{Q\succeq 0: \frac{1}{n}\sum_{i=1}^nQ_{ii} =
1\}}\frac{1}{2n}\bE\Big\{\log\,{\rm Det} \Big(\matI_m +\frac{1}{\sigma^2}
H_n Q H_n^{\top}
\Big)\Big\}\, .
\end{align*}
when the input covariance is $Q$
For the case of $H_{ij}$ being i.i.d. symmetric Gaussian random variables it was shown in
\cite{Tel99} that the above maximum is achieved for $Q=\matI_n$. Here, we 
assume that little is known about the channel gains and therefore
this covariance matrix is used for other matrices $H_{n}$ as well. Under this
assumption, the achievable average rate is given by 
\begin{align*}
C_n(H_n) = \frac{1}{2n}\sum_{i=1}^{m} \log\Big\{1+
\frac{1}{\sigma^2}\lambda_i(H_n
H_n^{\top})\Big\} = \frac{1}{2n}\sum_{i=1}^{n}
\log\Big\{1+\frac{1}{\sigma^2}\lambda_i(H_n^\top
H_n)\Big\}.
\end{align*}
Under the above theorem implies the following result for the MIMO channels.
\begin{corollary}[Sparse-Dense Equivalence for the MIMO Capacity]\label{cor:mimosparsedense}
Let $\matA_n = \{A_{ij}\}_{1\le i\le m,1\le j\le n}$ and 
$\matB_n = \{B_{ij}\}_{1\le i\le m,1\le j\le n}$ denote two 
$m\times n$ dimensional random  matrices  of standard type.
For $\gamma>0$, let $\matA_n^{\gamma}$ be the sparsification of 
$\matA_n$. Then
\begin{align*}
\lim_{\gamma\to\infty}\lim_{n\to\infty,m=n \alpha }
\big\{\bE[C_n(\gamma^{-1/2}\matA^\gamma_n)] -
\bE[C_n(n^{-1/2}\matB_n)]\big\} = 0\, .
\end{align*} 
\end{corollary}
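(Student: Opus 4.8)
The plan is to turn the MIMO capacity into an integral of the Stieltjes transform of the associated Wishart matrix, and then to feed in Theorem~\ref{thm:wishartsparsedense}. Writing $H_n^{\top}H_n$ for the $n\times n$ Gram matrix with eigenvalues $\lambda_i\ge 0$, the per-antenna capacity is $C_n(H_n)=\frac{1}{2n}\sum_{i=1}^n\log(1+\lambda_i/\sigma^2)$, i.e.\ the integral of the test function $f(\lambda)=\frac12\log(1+\lambda/\sigma^2)$ against the spectral measure. Since this $f$ is \emph{unbounded}, the remark preceding the statement — which transfers convergence only for bounded continuous test functions — does not apply directly, and this is exactly the difficulty to be handled. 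Instead I would use the elementary identity $\log(1+\lambda/\sigma^2)=\int_{\sigma^2}^{\infty}\big(\frac1z-\frac1{z+\lambda}\big)\,\de z$, valid for $\lambda\ge 0$, together with Tonelli (the integrand is nonnegative) to obtain
\begin{align*}
C_n(H_n)=\frac12\int_{\sigma^2}^{\infty}\Big(\frac1z-S_n(\matW_n,z)\Big)\,\de z,
\qquad \matW_n\equiv H_n^{\top}H_n,
\end{align*}
where $S_n(\matW_n,z)=\frac1n\trace((\matW_n+z\matI)^{-1})$ is now evaluated at \emph{real} $z>0$; this is legitimate because $\matW_n\succeq 0$, so $\matW_n+z\matI$ is invertible with resolvent bounded by $1/z$.

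Specializing to $H_n=\gamma^{-1/2}\matA_n^{\gamma}$ and $H_n=n^{-1/2}\matB_n$ gives $\matW_n=\matW_{\matA,n}^{\gamma}$ and $\matW_n=\matW_{\matB,n}$ respectively, exactly the matrices of Theorem~\ref{thm:wishartsparsedense}. Taking expectations and subtracting,
\begin{align*}
\bE[C_n(\gamma^{-1/2}\matA_n^{\gamma})]-\bE[C_n(n^{-1/2}\matB_n)]
=\frac12\int_{\sigma^2}^{\infty}\Big(\bE[S_n(\matW_{\matB,n},z)]-\bE[S_n(\matW_{\matA,n}^{\gamma},z)]\Big)\,\de z .
\end{align*}
By Theorem~\ref{thm:wishartsparsedense} the iterated limit $\lim_{\gamma\to\infty}\lim_{n\to\infty}$ of the integrand vanishes for each fixed $z$. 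That theorem is stated for $z\in\complex\setminus\reals$, so the first point to verify is that the same equivalence holds at real $z>0$. I would get this from Vitali's theorem — the maps $z\mapsto\bE[S_n(\matW_n,z)]$ are analytic and uniformly bounded by $1/\mathrm{Re}(z)$ on the right half-plane, so pointwise convergence off the real axis propagates to $z>0$ — or, more simply, by noting that the Lindeberg argument underlying Theorem~\ref{thm:wishartsparsedense} bounds derivatives of the resolvent, and these remain bounded, in powers of $1/z$, for real $z>0$.

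The remaining, and main, task is to exchange the integral with the two limits. Here I would use the uniform tail bound
\begin{align*}
0\le \frac1z-\bE[S_n(\matW_n,z)]
=\frac1n\sum_{i}\bE\Big[\frac{\lambda_i}{z(z+\lambda_i)}\Big]
\le \frac{\bE[\trace(\matW_n)]}{n\,z^2}=\frac{\alpha}{z^2},
\end{align*}
which holds for both ensembles because $\bE[\trace(\matW_{\matB,n})]/n=\bE[\trace(\matW_{\matA,n}^{\gamma})]/n=\alpha$ under the standard-type normalization (this is where the unit second moment of the entries enters). Hence the integrand is dominated by $2\alpha/z^2$, uniformly in $n$ and $\gamma$, and $\int_{\sigma^2}^{\infty}2\alpha\,z^{-2}\,\de z=2\alpha/\sigma^2<\infty$. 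Dominated convergence then lets me pass $\lim_{n\to\infty}$ inside the integral for each fixed $\gamma$ (the inner pointwise limit existing by the theorem and the known limiting spectra of the two ensembles), and a second application — with the same bound applied to the resulting $n\to\infty$ limiting integrand — lets me pass $\lim_{\gamma\to\infty}$ inside; since the integrand then tends to $0$ pointwise by Theorem~\ref{thm:wishartsparsedense}, the whole expression vanishes.

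I expect the main obstacle to be precisely this interchange of limits with the integral rather than any single estimate: one must control the tail of the $z$-integral uniformly in both $n$ and $\gamma$ (handled by the first-moment bound above) and transfer Theorem~\ref{thm:wishartsparsedense} from complex to real $z$. Both steps are routine but essential, and together they are what the unboundedness of $\log(1+\lambda/\sigma^2)$ forces upon us in place of the simpler bounded-test-function argument mentioned after the Wishart theorem.
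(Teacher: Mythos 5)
Your proof is correct, but it takes a genuinely different route from the paper's. The paper invokes the standard fact that convergence of Stieltjes transforms at complex $z$ implies weak convergence of the expected spectral measures, notes that the limiting distribution function is continuous so that bounded measurable test functions suffice, and then handles the unboundedness of $\log(1+x)$ by truncation at level $M$: the tail $\bE[\log(1+\lambda)\ind_{\lambda>M}]$ is bounded by $\bE[\lambda^2]/M$, which is controlled through an explicit computation of $\bE\,\trace\big(((\matA_n^\gamma)^\top\matA_n^\gamma)^2\big)$, and $M\to\infty$ is taken at the very end. You instead write $\log(1+\lambda/\sigma^2)=\int_{\sigma^2}^{\infty}\big(z^{-1}-(z+\lambda)^{-1}\big)\,\de z$ and reduce everything to the Stieltjes transform at real $z>\sigma^2$, dominating the integrand by $2\alpha/z^2$ via the first-moment identity $\bE[\trace\matW_n]/n=\alpha$. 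Your route buys a cheaper moment requirement (first moment of $\matW_n$ rather than second) and bypasses the weak-convergence and continuity-of-the-limiting-CDF steps entirely, at the price of having to extend Theorem~\ref{thm:wishartsparsedense} to real $z>0$ --- which, as you correctly note, is immediate either by Vitali's theorem (the difference of expected Stieltjes transforms is analytic and bounded by $2/\mathrm{Re}(z)$ on the right half-plane) or because the resolvent bound $\Vert(\matW_n+z\matI)^{-1}\Vert_2\le 1/z$ used in the Lindeberg derivative estimates survives for real $z>0$ when $\matW_n\succeq 0$. One small streamlining: rather than two nested applications of dominated convergence (the first of which tacitly requires $\lim_{n}\bE[S_n(\matW_{\matA,n}^{\gamma},z)]$ to exist at fixed $\gamma$, something Theorem~\ref{thm:wishartsparsedense} does not by itself supply), use the paper's footnote device of replacing the iterated limit by a single limit along an arbitrary sequence $\gamma_n\to\infty$, and apply dominated convergence once to $h_n(z)=\bE[S_n(\matW_{\matB,n},z)]-\bE[S_n(\matW_{\matA,n}^{\gamma_n},z)]$ with the same dominating function $2\alpha/z^2$.
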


%
%
\subsection{Spin glass models}\label{sec:sk}

Spin glass models have been object of intense interest within 
statistical mechanics, mathematical physics and probability
theory. Both rigorous and heuristic techniques from this 
domain have been applied with success in information theory \cite{MM}.

A number of universality and sparse-dense equivalence results
have been proved in this context \cite{Sou06,Tala,GuerraToninelli}. 
We re-derive two of these results here because they provide a very
simple and instructive illustration of the proof technique that is used 
in the more intricate examples listed in the previous sections.

We  focus in particular on the Sherrington-Kirkpatrick (SK) model.
The model is defined by the Hamiltonian function
$\ham:\{+1,-1\}^n\times \reals^{n\times n}\to\reals$ given by
\begin{align*}
\ham(x,A_n) = -\frac1{\sqrt 2}\sum_{i,j=1}^{n}A_{ij}x_ix_j =-\,\frac{1}{\sqrt
2} x^{\top}A_n x\, ,
\end{align*}
for an $n\times n$ dimensional matrix $A_n$ and $x=(x_1,\dots,x_n)
\in\{+1,-1\}^n$.
An important object of interest in this context is the 
free entropy density  at inverse temperature $\beta$,
which is defined by
\begin{align*}
f(\beta,A_n) \equiv \frac{1}{n}\log
\Big\{\sum_{x\in\{+1,-1\}^n}e^{-\beta\ham(x,A_n)}\Big\}.
\end{align*}

Universality of the free energy for the SK model 
was established in \cite{Tal02} and was later extended to general distributions in \cite{Ton02}.
As shown in \cite{Sou06} the current approach gives a stronger result.
\begin{theorem}[Universality for the SK model \cite{Sou06}]\label{thm:skuniversal}
Let $\matA_n = \{A_{ij}\}_{1\le i,j\le n}$ and 
$\matB_n= \{B_{ij}\}_{1\le i,j\le n}$ be two $n\times n$ 
dimensional random  matrices. Assume 
that both $\{A_{ij}\}$ and $\{B_{ij}\}$ are collections
of i.i.d. random
variables with $\bE[A_{ij}]=\bE[B_{ij}]=0$,
$\bE[A^2_{ij}]=\bE[B^2_{ij}]=1$, and $\bE[|A_{ij}|^3],
\bE[|B_{ij}|^3]\leq K <\infty$. Then
\begin{align*}
\lim_{n\to\infty}\big\{\bE[f(\beta,n^{-1/2}\matA_n)]-\bE[f(\beta,n^{-1/2}\matB_n)]\big\}=0\,.
\end{align*}
\end{theorem}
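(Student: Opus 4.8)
The plan is to apply the Generalized Lindeberg Principle (Theorem~\ref{thm:sourav}) to the free entropy density viewed as a function of the $n^2$ matrix entries. Concretely, set $U_{ij}=A_{ij}/\sqrt{n}$ and $V_{ij}=B_{ij}/\sqrt{n}$, regard these as two collections of $N=n^2$ independent variables, and define
\begin{align*}
\psi(U) \equiv \frac{1}{n}\log\Big\{\sum_{x\in\{+1,-1\}^n} \exp\Big(\frac{\beta}{\sqrt{2}}\sum_{i,j=1}^n U_{ij}x_ix_j\Big)\Big\}\, ,
\end{align*}
so that $\bE[f(\beta,n^{-1/2}\matA_n)]=\bE[\psi(U)]$ and likewise for $\matB_n$. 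The first observation is that the moment-matching hypotheses make the first two terms of Theorem~\ref{thm:sourav} vanish: since $\bE[A_{ij}]=\bE[B_{ij}]=0$ and $\bE[A_{ij}^2]=\bE[B_{ij}^2]=1$, the rescaled variables satisfy $\bE[U_{ij}]=\bE[V_{ij}]=0$ and $\bE[U_{ij}^2]=\bE[V_{ij}^2]=1/n$, so that $a_{ij}=b_{ij}=0$ for every entry. Only the third-derivative term survives, and it remains to bound $L_3(\psi)$ together with the third moments $M_3$.

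Next I would compute the entrywise derivatives of $\psi$ using the Gibbs-measure structure. Writing $\langle\,\cdot\,\rangle$ for the average under the measure proportional to $\exp(\frac{\beta}{\sqrt2}x^\top U x)$ and $s_{ij}\equiv\frac{\beta}{\sqrt2}x_ix_j=\partial_{U_{ij}}(\frac{\beta}{\sqrt2}x^\top Ux)$, the standard identities for derivatives of a log-partition function give
\begin{align*}
\partial_{U_{ij}}\psi=\frac{1}{n}\langle s_{ij}\rangle,\qquad
\partial^2_{U_{ij}}\psi=\frac{1}{n}\big(\langle s_{ij}^2\rangle-\langle s_{ij}\rangle^2\big),
\end{align*}
and $\partial^3_{U_{ij}}\psi=\frac{1}{n}\kappa_3(s_{ij})$, where $\kappa_3$ denotes the third cumulant of $s_{ij}$ under the Gibbs measure. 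The crucial point is that $x_i\in\{+1,-1\}$ forces $|s_{ij}|=\beta/\sqrt2$ pointwise, so all these cumulants are bounded by constants depending only on $\beta$; in particular $|\partial^3_{U_{ij}}\psi|\le C\beta^3/n$ for an absolute constant $C$. Thus $\psi$ is smooth with $L_3(\psi)\le C\beta^3/n$, and the uniform derivative bounds required by Theorem~\ref{thm:sourav} genuinely hold — this is precisely why the simpler principle suffices and the modified Theorem~\ref{thm:souravmodified} is not needed here.

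Finally, the power counting closes the argument. The third moments of the rescaled entries satisfy $\bE[|U_{ij}|^3]=n^{-3/2}\bE[|A_{ij}|^3]\le Kn^{-3/2}$, and likewise for $V$, so one may take $M_3=2Kn^{-3/2}$. Theorem~\ref{thm:sourav} with $N=n^2$ variables then yields
\begin{align*}
|\bE[\psi(U)]-\bE[\psi(V)]|\le \frac{1}{6}\,N\,L_3(\psi)\,M_3\le \frac{1}{6}\,n^2\cdot\frac{C\beta^3}{n}\cdot 2Kn^{-3/2}=\frac{C\beta^3K}{3}\,n^{-1/2}\, ,
\end{align*}
which tends to $0$ as $n\to\infty$ and proves the claim. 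The only real work is the cumulant computation above; it is routine once one recognizes that the boundedness of the spins makes every derivative of $\psi$ carry exactly one factor of $1/n$, and the decisive cancellation is that this $1/n$, combined with the $n^{-3/2}$ from the rescaled third moments, beats the $n^2$ entries by a margin of $n^{-1/2}$. I do not anticipate a genuine obstacle; the one point deserving care is verifying that the log-partition function is globally thrice continuously differentiable, which is immediate since the partition function is a finite sum of exponentials and hence strictly positive and smooth, so Theorem~\ref{thm:sourav} applies without qualification.
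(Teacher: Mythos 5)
Your proposal is correct and follows essentially the same route as the paper: both apply Theorem~\ref{thm:sourav} to the $n^2$ rescaled entries, observe that the matching of first and second moments kills the $a_i$ and $b_i$ terms, bound the third entrywise derivative of the free entropy by $O(\beta^3/n)$ (your third-cumulant computation is exactly the paper's explicit formula $\partial^3_{rc}f = \frac{\beta^3}{\sqrt{2}n}\langle x_rx_c\rangle(1-\langle x_rx_c\rangle^2)$ in disguise, using that $x_ix_j\in\{+1,-1\}$), and conclude with the same $n^2\cdot n^{-1}\cdot n^{-3/2}=n^{-1/2}$ power counting. No gaps.
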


The sparse-dense equivalence was proved in \cite{GuerraToninelli}
under the slightly stronger assumption of uniformly bounded entries
$|A_{ij}|\le 1$ with even distribution.
\begin{theorem}[Sparse-Dense Equivalence]\label{thm:sksparsedense}
Let $\matA_n = \{A_{ij}\}_{1\le i,j\le n}$ and 
$\matB_n= \{B_{ij}\}_{1\le i,j\le n}$ be two $n\times n$ 
dimensional random  matrices. Assume 
that both $\{A_{ij}\}$ and $\{B_{ij}\}$ are collections
of i.i.d. random
variables with $\bE[A_{ij}]=\bE[B_{ij}]=0$,
$\bE[A^2_{ij}]=\bE[B^2_{ij}]=1$, and $\bE[|A_{ij}|^3],
\bE[|B_{ij}|^3]\leq K <\infty$. 
For $\gamma>0$, let $\matA_n^{\gamma}$ be the sparsification of $\matA_n$. 
Then
\begin{align*}
\lim_{\gamma\to\infty}\lim_{n\to\infty}
\big\{\bE[f(\beta,\gamma^{-1/2}\matA_n^\gamma)] -\bE[f(\beta,n^{-1/2}\matB_n)]\big\}=0\, .
\end{align*} 
\end{theorem}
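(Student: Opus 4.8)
The plan is to apply the modified Lindeberg principle of Theorem~\ref{thm:souravmodified}, treating the free entropy density as a function of the $n^2$ matrix entries, which play the role of the independent coordinates (so $N=n^2$). Concretely, let $g:\reals^{n\times n}\to\reals$ be defined by $g(M)=\frac{1}{n}\log\sum_{x\in\{+1,-1\}^n}\exp\{\phi\, x^{\top} M x\}$ with $\phi\equiv\beta/\sqrt 2$, so that $f(\beta,M)=g(M)$. I take $U$ to be the collection of entries $U_{kl}$ of the rescaled sparse matrix $\gamma^{-1/2}\matA_n^{\gamma}$ and $V$ the entries $V_{kl}=n^{-1/2}B_{kl}$ of $n^{-1/2}\matB_n$; both are families of independent random variables, and $g$ is smooth, so the hypotheses of Theorem~\ref{thm:souravmodified} are met.

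First I would verify that the first two families of terms in the Lindeberg bound vanish because the normalizations match the first two moments. Since $\bE[A_{kl}]=\bE[B_{kl}]=0$, we have $\bE[U_{kl}]=\gamma^{-1/2}\,\bE[A_{kl}]\,(\gamma/n)=0=\bE[V_{kl}]$, hence $a_{kl}=0$; likewise $\bE[U_{kl}^2]=\gamma^{-1}\,\bE[A_{kl}^2]\,(\gamma/n)=1/n=\bE[V_{kl}^2]$, hence $b_{kl}=0$. Thus only the third-order remainder terms of Theorem~\ref{thm:souravmodified} survive, and it remains to control $|\partial_{kl}^3 g|$.

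Next I would compute the derivatives of $g$ in a single off-diagonal entry $M_{kl}$. Writing $\langle\,\cdot\,\rangle$ for the Gibbs average induced by the weight $\exp\{\phi\, x^{\top} M x\}$, differentiation gives $\partial_{kl}g=\frac{\phi}{n}\langle x_kx_l\rangle$, and iterating (each differentiation of a Gibbs average produces a connected correlation) yields $\partial_{kl}^2 g=\frac{\phi^2}{n}(1-\langle x_kx_l\rangle^2)$ and $\partial_{kl}^3 g=-\frac{2\phi^3}{n}\langle x_kx_l\rangle(1-\langle x_kx_l\rangle^2)$, where I have used $(x_kx_l)^2=1$. Because $\langle x_kx_l\rangle\in[-1,1]$ for every matrix $M$, this gives the uniform bound $|\partial_{kl}^3 g|\le 2\phi^3/n=:C(\beta)/n$, valid in particular along the whole interpolation path appearing in Theorem~\ref{thm:souravmodified}. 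The diagonal entries contribute only the linear term $\frac{\phi}{n}\sum_i M_{ii}$ to $g$, so their second and higher derivatives vanish and they may be ignored.

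Finally I would assemble the estimate. Bounding each third-derivative factor by $C(\beta)/n$ and integrating the remainder terms yields
\begin{align*}
|\bE[f(\beta,\gamma^{-1/2}\matA_n^{\gamma})]-\bE[f(\beta,n^{-1/2}\matB_n)]|
\le \frac{C(\beta)}{6n}\sum_{k,l=1}^{n}\big(\bE[|U_{kl}|^3]+\bE[|V_{kl}|^3]\big).
\end{align*}
The decisive point is the scaling of the third moments: $\bE[|U_{kl}|^3]=\gamma^{-3/2}\,\bE[|A_{kl}|^3]\,(\gamma/n)\le K\gamma^{-1/2}/n$, whereas $\bE[|V_{kl}|^3]=n^{-3/2}\,\bE[|B_{kl}|^3]\le K n^{-3/2}$. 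Summing over the $n^2$ entries bounds the right-hand side by $\frac{C(\beta)K}{6}\big(\gamma^{-1/2}+n^{-1/2}\big)$. Letting $n\to\infty$ kills the dense contribution, and then $\gamma\to\infty$ kills the remaining $\gamma^{-1/2}$ term, giving the claim. The main obstacle, though a mild one, is establishing the uniform-in-$M$ third-derivative bound and confirming that it holds along the interpolation path; the structural feature that makes everything work is that sparsification suppresses the third moment of $U_{kl}$ by the factor $\gamma^{-1/2}$ relative to its variance, which is exactly what forces the bound to vanish in the iterated limit.
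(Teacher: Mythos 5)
Your proposal is correct and follows essentially the same route as the paper: compute the third derivative of the free entropy in a single matrix entry, bound it uniformly by $O(\beta^3/n)$, observe that the normalizations make the first two moments of $\gamma^{-1/2}A^{\gamma}_{ij}$ and $n^{-1/2}B_{ij}$ coincide so only the third-order remainder survives, and exploit that the normalized sparse entry has third moment of order $\gamma^{-1/2}/n$, yielding the bound $O(\gamma^{-1/2}+n^{-1/2})$. The only cosmetic differences are that the paper invokes Theorem~\ref{thm:sourav} directly with the constant $L_3(f)\le \beta^3/(\sqrt{2}n)$ while you pass through Theorem~\ref{thm:souravmodified} and then use the uniform bound, and that you explicitly dispose of the diagonal entries, a detail the paper leaves implicit.
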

%
%
\section{Proof of Theorem~\ref{thm:souravmodified}}\label{sec:proofSou}
\label{sec:Proof1}

\begin{proof}[Proof of Theorem~\ref{thm:souravmodified}]
Let $\partial^r_i f$ denote $\frac{\partial^r f}{\partial x^r_i}$. Let 
$$\bW_i= (U_1,\dots,U_{i},V_{i+1},\dots,V_n),$$
$$\bW^0_i=(U_1,\dots,U_{i-1},0,V_{i+1},\dots,V_n).$$ Then 
\begin{align}\label{eqn:telescoping}
\bE[f(U)] - \bE[f(V)] = \sum_{i=1}^n (\bE[f(\bW_i)] - \bE[f(\bW_{i-1})])\, .
\end{align}
From the third-order Taylor expansion, we have
\begin{align}\label{eqn:taylor2}
f(\bW_i) = f(\bW_i^0) + U_i\partial_i f(\bW_i^0) +
\frac{U_i^2}{2}\partial^2_if(\bW_i^0) + 
\frac{1}{2}\int_0^{U_i}\partial_i^3f(U_1^{i-1},s,V_{i+1}^n)(U_i-s)^2
\de s\, .
\end{align}
Similarly, we get
\begin{align}\label{eqn:taylor1}
f(\bW_{i-1}) = f(\bW_i^0) + V_i\partial_i f(\bW_i^0) +
\frac{V_i^2}{2}\partial^2_if(\bW_i^0) + 
\frac{1}{2}\int_0^{V_i}\partial_i^3f(U_1^{i-1},s,V_{i+1}^n)(V_i-s)^2
\de s.
\end{align}
From Eq.~\eqref{eqn:telescoping},  using \eqref{eqn:taylor2}
and \eqref{eqn:taylor1}, we
get
\begin{align*}
&\bE[f(U)] - \bE[f(V)]
=\sum_{i=1}^n\Big\{\bE[(U_i-V_i)\partial_if(\bW_i^0)] +
\frac12\bE[(U_i^2-V_i^2)\partial_i^2f(\bW_i^0)]\\
&+\bE\big[\frac{1}{2}\int_0^{U_i}\partial_i^3f(U_1^{i-1},s,V_{i+1}^n)(U_i-s)^2
\de s\big]
+\bE\big[\frac{1}{2}\int_0^{V_i}\partial_i^3f(U_1^{i-1},s,V_{i+1}^n)(V_i-s)^2
\de s\big]\Big\}.
\end{align*}
The result follows by noting that $f(\bW_i^0)$ is independent of 
$\{U_i,V_i\}$.
\end{proof}
%
%
\section{Proofs of statements from Section \ref{sec:applications}}
\label{sec:Proof2}

We will present the proofs starting from the last 
example, i.e. the Sherrington-Kirkpatrick model in Section 
\ref{sec:sk}. As mentioned, this is a particularly simple example of 
the general proof strategy. 
%
%
\subsection{SK Model}\label{sec:skproof}

As mentioned in
Section~\ref{sec:sk}, the Hamiltonian for this model is given by 
\begin{align*}
\ham(x,A_n) = -\frac{1}{\sqrt 2}\, x^\top A_nx,
\end{align*}
where $A_n$ is an $n\times n$ dimensional matrix.
For a function $(x,A_n)\mapsto g(x,A_n)$, we denote by 
$\<g(x,A_n)\>$ its expectation with respect to the 
probability distribution $p_{A_n}(x) \propto \exp\{-\beta\ham(x,A_n)\}$
on $\{+1,-1\}^n$. Explicitly:
\begin{align}
\langle g(x,A_n)\rangle = \frac{\sum_{x \in\mathcal{X}^n} g(x,A_n)\,
e^{-\ham(x,Z,A_n)}}{\sum_{x\in\mathcal{X}^n} e^{-\ham(x,Z,A_n)}}\, .
\end{align}
Denote by $\partial^{k}_{rc}$ the $k$-th partial derivative 
with respect to $A_{rc}$ (row $r$, column $c$).
A straightforward calculation shows that 
third derivative $\partial^3_{rc}f(\beta,A_n)$ is given by 
\begin{align*}
\partial^3_{rc}f(\beta, A_ n) =\frac{\beta^3}{\sqrt{2}n}
\langle x_rx_r \rangle(1-\langle
x_rx_c\rangle^2)\ ,
\end{align*}
which implies $L_3(f) \leq \beta^3/(\sqrt{2}n)$ (with $L_3$ defined as in 
Theorem \ref{thm:sourav}). 

\begin{proof}[Proof of Theorem~\ref{thm:skuniversal}]
From the definition of the random matrices $\matA_n$ and $\matB_n$, we have 
we have $\bE[A_{ij}] = \bE[B_{ij}]$, $\bE[A_{ij}^2]= \bE[B_{ij}^2]$ and 
$\bE[|A_{ij}|^3]\leq (1+K)$, $\bE[|B_{ij}|^3]\leq (1+K)$.
Using Theorem~\ref{thm:sourav} we get
\begin{align*}
|\bE[f(n^{-1/2}\matA_n)] - \bE[f(n^{-1/2}\matB_n)]| \leq
\frac{1}{6}n^2\frac{\beta^3}{\sqrt{2}\, n}\max_{r,c\in [n]}\Big\{\bE\Big[\frac{|A_{rc}|^3}{n^{3/2}}\Big],
\bE\Big[\frac{|B_{rc}|^3}{n^{3/2}}\Big]\Big\}=O\Big(\frac{1}{\sqrt n}\Big).
\end{align*}
\end{proof}
\begin{proof}[Proof of Theorem~\ref{thm:sksparsedense}]
From the definition of the random matrices 
$\matA^\gamma_n$ and $\matB_n$, we have
we have $\bE[A_{ij}] = \bE[B_{ij}]$, $\bE[A_{ij}^2]= \bE[B_{ij}^2]$
and $\bE[|A^{\gamma}_{ij}|^3]\leq (1+K)\gamma/n$, $\bE[|B_{ij}|^3]\leq (1+K)$
(with $K$ independent of $\gamma$ and $n$). Therefore 
using the estimate on $L_3(f)$ fro the previous proof,
together with  Theorem~\ref{thm:sourav} we have
\begin{align*}
|\bE[f(\gamma^{-1/2}\matA^{\gamma}_n)] - \bE[f(n^{-1/2}\matB_n)]| \leq
\frac{1}{6}n^2\frac{\beta^3}{\sqrt{2}\, n}
\max_{r,c\in [n]}\Big\{\bE\Big[\frac{|A^\gamma_{rc}|^3}{\gamma^{3/2}}\Big],
\bE\Big[\frac{|B_{rc}|^3}{n^{3/2}}\Big]\Big\}
\leq K'\beta^3 \max\Big\{\frac{1}{\sqrt\gamma},\frac{1}{\sqrt n}\Big\}.
\end{align*}
Therefore, $\lim_{\gamma\to\infty}\lim_{n\to\infty}\big\{\bE[f(\gamma^{-1/2}\matA^\gamma_n)]-\bE[f(n^{-1/2}\matB_n)]\big\} =0.$
\end{proof}
%
%
\subsection{CDMA}\label{sec:cdmaproof}

For any $m\times n$ matrix $A_n$, the capacity \eqref{capacity} can be expressed as 
\begin{align*}
C_n(A_n) = \log 2
-\frac{1}{2}\alpha-\frac{1}{n}\sum_{x^{\inp}\in\{+1,-1\}^n}\frac{1}{2^n}\bE_{Z}\log\Big\{\sum_{x\in\{+1,-1\}^n}
e^{-\frac{1}{2\sigma^2}\Vert Z + A_n(x^{\inp} - x)\Vert^2_2}\Big\}\, ,
\end{align*}
where $Z$ is an $m$-dimensional random vector, whose entries are i.i.d.
$\normal(0,\sigma^2)$. By a simple change of variables in the sum over
$x$, we get
\begin{align*}
C_n(A_n) = \log 2
-\frac{1}{2}\alpha-\frac{1}{n}\sum_{x^{\inp}\in\{+1,-1\}^n}\frac{1}{2^n}\bE_{Z}\log\Big\{\sum_{x\in\{0,2\}^n}
e^{-\frac{1}{2\sigma^2}\Vert Z + A_nx^{\inp}x\Vert^2_2}\Big\}\, .
\end{align*}
For a matrix $A_n=\{A_{i,j}\}_{1\le i\le m,1\le j\le n}$, and a vector 
$x^{\inp}\in\{+1,-1\}^n$, define
$A_n(x^{\inp})$ by letting $[A_n(x^{\inp})]_{ij} = A_{ij}x^{\inp}_j$.
Further, define the Hamiltonian function
$\ham:\{0,2\}^n\times\reals^m\times\reals^{m\times n}\to\reals$ 
by
\begin{align*}
\ham(x,Z,A_n) = \frac{1}{2\sigma^2}\Vert Z +  A_nx\Vert_2^2 =
\frac{1}{2\sigma^2}\sum_{i=1}^m\Big(Z_i + \sum_{j=1}^n A_{ij}x_j\Big)^2\, .
\end{align*}
Then we have
\begin{eqnarray*}
C_n(A_n) & = &\log 2
-\frac{1}{2}\alpha- \frac{1}{2^n}\sum_{x^{\inp}\in\{+1,-1\}^n}\E_Z
f(A_n(x^{\inp}),Z)\, ,\\
f(A_n,Z) & \equiv &  \frac{1}{n}\log
\Big\{\sum_{x\in\{0,2\}^n} e^{-\ham(x,Z,A_n)}\Big\}\, .
\end{eqnarray*}
If $\matA_n$ is a random matrix of standard type,
and $x^{\inp}\in\{+1,-1\}^n$, then $\matA_n(x^{\inp})$ is also a random matrix 
of standard type. In order to prove the universality 
results, theorems \ref{thm:cdmauniversal} and \ref{thm:cdmasparsedense},
it is therefore sufficient to fix --say-- $x^{\inp} =
(+1,\dots,+1)$, and prove universality of $\E_Z f(A_n,Z)$.

Analogously to the proof in the previous
section, for a function $(x,Z,A_n)\mapsto g(x,Z,A_n)$, we let
\begin{align}\label{eqn:gibbs}
\langle g(x,Z,A_n)\rangle \equiv \frac{\sum_{x \in\mathcal\{0,2\}^n} g(x,Z,A_n)
e^{-\ham(x,Z,A_n)}}{\sum_{x\in\mathcal{X}^n} e^{-\ham(x,Z,A_n)}}\, .
\end{align}
In order
use Theorem~\ref{thm:souravmodified} we need to estimate the third derivatives
of $f$. Again, $\partial_{rc}^k f$ denote the $k$-th derivative of 
$f$ with respect
to the $A_{r,c}$. 
The third derivative is then given by 
\begin{align}
\partial_{rc}^3f(A_n,Z) = \frac{1}{n(2\sigma^2)^3}\Big(- \langle
(\partial_{rc}\ham(x,Z,A_n))^3\rangle + 3 \langle
\partial_{rc}\ham(x,Z,A_n)\rangle
\langle
(\partial_{rc}\ham(x,Z,A_n))^2\rangle
- 2 \langle
\partial_{rc}\ham(x,Z,A_n)\rangle^3\Big)\, .\nonumber\\
\label{eq:FreeEnergyDerivative}
\end{align}

\begin{proof}[Proof of Theorem~\ref{thm:cdmauniversal}]
Let $\matA_n$ and $\matB_n$ be as defined in the theorem.
Let $\matD_n(r,c,s)$ denote the matrix with entries
\begin{align*}
D_{ij} = \left\{\begin{array}{ll}
\frac{1}{\sqrt m}A_{ij},& \text{ if } i< r \text{ or } i=r \text{ and } j< c,\\
s,& \text{ if } i=r, \text{ and } j=c,\\
\frac{1}{\sqrt m}B_{ij},& \text{ otherwise}.
\end{array}\right.
\end{align*}
From now onwards we use $\ham(x)$ to denote $\ham(x,Z,\matD_n(r,c,s))$
and let $\<\,\cdot\,\>$ denote the corresponding average,
as per Eq.~(\ref{eqn:gibbs}). 
Further, for $r\in [m]$, let 
$\Theta_{r}(x) \equiv (Z_r + \sum_{j=1}^n D_{rj}x_{j})/(\sqrt 2\sigma)$ and 
$\ham_{\sim r}(x) =\ham(x)-\Theta_r(x)^2$.
Notice that 
\begin{eqnarray*}
\ham(x) = \sum_{i\in [m]}\Theta_i(x)^2\, ,\;\;\;\;\;\;\;\; 
 \ham_{\sim r}(x)  = \sum_{i\in [m]\setminus r}\Theta_i(x)^2\, .
\end{eqnarray*}
Accordingly, we let $\langle \,\cdot\,\rangle_{\sim r}$ denote the average as defined in
\eqref{eqn:gibbs} with the Hamiltonian
$\ham_{\sim r}(x)$.

The derivative of $\ham(x)$ with respect to $A_{rc}$ is 
\begin{align*}
\partial_{rc}\ham(x) = \frac{1}{2\sigma^2}\Big(Z_r + \sum_{j=1}^n
D_{rj}x_{j}\Big) 2{x_c}= \frac{1}{\sqrt 2\sigma}2x_c\Theta_r(x).
\end{align*}
Its fourth moment can then be bounded as 
\begin{align*}
\bE\langle (\partial_{rc}\ham)^4\rangle_s & =
\bE\Big\{\frac{\sum_{x}e^{-\ham(x)}(\partial_{rc}\ham)^4}{\sum_{x}e^{-\ham(x)}}
\Big\}\\
& \leq \bE \Big\{\frac{\sum_{x}e^{-\ham_{\sim r}(x)
-\Theta_r(x)^2}(64/\sigma^4)\Theta_r(x)^4}{\sum_{x}
e^{-\ham_{\sim r}(x)
-\Theta_r(x)^2}}\Big\}\, .
\end{align*}
Since the random variables $e^{-\Theta_r(x)^2}$ and $\Theta_r(x)^4$
are negatively correlated, we have
\begin{align*}
\langle e^{-\Theta_r(x)^2}\Theta_r(x)^4\rangle_{\sim r} \leq \langle
e^{-\Theta_r(x)^2}\rangle_{\sim r} \langle\Theta_r(x)^4 \rangle_{\sim r},
\end{align*}
which implies 
\begin{align}\label{eqn:boundfourthmoment}
\bE\langle (\partial_{rc}\ham)^4\rangle_s {\leq} \frac{64}{\sigma^4}\bE\langle
\Theta_r(x)^4 \rangle_{\sim r}.
\end{align}
Using the inequality $(a+b+c)^4\leq 27(a^4+b^4+c^4)$ and the definition of $\{A_{ij}\}$ and $\{B_{ij}\}$ in
Theorem~\ref{thm:cdmauniversal}, we get
\begin{align*}
\bE\langle (\partial_{rc}\ham)^4\rangle
&\leq \frac{27\cdot 64}{4\sigma^4}\Big\{
\bE[Z_r^4] +\bE[\<(D_{rc}x_c)^4\>_{\sim r}]+
\bE[\<(\sum_{i\in [n]\setminus c}D_{ri}x_i)^4\>_{\sim r}]\Big\}\\
& \le K_1 + K_1\, s^4 + K_1\, \bE[\<(\sum_{i\in [n]\setminus c}D_{ri}x_i)^4\>_{\sim r}\, ,
\end{align*}
where $K=K(\sigma)$ is a constant independent of $m,n$. 
If we use the subscript $i\neq j\neq k\neq\dots$
to denote all the tuples of distinct indices and we expand the power,
we get
\begin{align*}
\bE[\<(\sum_{i\in [n]\setminus c}D_{ri}x_i)^4\>_{\sim r}] &= 
\sum_{i, j, k, l\in [n]\setminus c}
\bE[D_{ri}D_{rj}D_{rk}D_{rl}\<x_ix_jx_kx_l\>_{\sim r}]\\
&=\sum_{i, j, k, l\in [n]\setminus c}
\bE[D_{ri}D_{rj}D_{rk}D_{rl}]\bE[\<x_ix_jx_kx_l\>_{\sim r}]=\\
&= \sum_{i\in [n]\setminus c}
\bE[D_{ri}^4]\bE[\<x_i^4\>_{\sim r}]+
+3\sum_{i\neq j\in [n]\setminus c}\bE[D_{ri}^2D_{rj}^2]
\bE[\<x_i^2x_j^2\>_{\sim r}]\, ,
\end{align*}
Here we used the fact that $\{D_{ri}\}_{1\le i\le n}$ are independent 
of $\ham_{\sim r}(x)$, and therefore of $\<x_ix_jx_kx_l\>_{\sim r}$
Further all the terms with one of the indices $i,j,k,l$
distinct from the all others vanish because $\E\{D_{ri}\}=0$
for all $i\neq c$ by our assumption on $\matA_n$, $\matB_n$.  
Using $x_i\in \{0,2\}$, we then get
\begin{align*}
\bE[\<(\sum_{i\in [n]\setminus c}D_{ri}x_i)^4\>_{\sim r}] &\le
 \sum_{i\in [n]\setminus c} \frac{(1+K)^2}{m^2}\, \cdot 16
+3\sum_{i\neq j\in [n]\setminus c}\frac{1}{m^2}\cdot 16\le K_2
\end{align*}
where $K_2 = K_2(\alpha)$ is another constant. 
Putting everything together, we get 
\begin{align*}
\bE\langle (\partial_{rc}\ham)^4\rangle \le K_3\, (1+s^4)\, .
\end{align*}
and therefore, by Jensen inequality, we get
$\bE\langle|\partial_{rc}\ham|^3\rangle \leq K_3(1+ |s|^3)$
(by eventually enlarging the constant $K_3$. 
Using Eq.~(\ref{eq:FreeEnergyDerivative}), this  finally
implies that
\begin{eqnarray*}
\bE[|\partial^3_{rc}f(\matD_n(r,c,s),Z)|] \leq \frac{K_4}{n} (1+|s|^3)\, .
\end{eqnarray*}

We are now in position to apply Theorem \ref{thm:souravmodified}. 
Since the means and variances of the entries of $\matA_n$ and $\matB_n$
are equal, we have $a_i=b_i=0$. We get therefore
\begin{align*}
|\bE[f(m^{-1/2}\matA_n,Z)] - \bE[f(m^{-1/2}\matB_n,Z)]| &\leq
\frac{K_4}{n}\sum_{r=1}^m\sum_{c=1}^n\big(\bE_{A_{rc}}\int_0^{A_{rc}/\sqrt m} (1+|s|^3)(\frac{A_{rc}}{\sqrt{m}}-s)^2 \de s \\
&\phantom{=====}+ \bE_{B_{rc}}\int_0^{B_{rc}/\sqrt m} (1+|s|^3)(\frac{B_{rc}}{\sqrt{m}}-s)^2 \de s \big)\\
& \leq m K'\sum_{i=3}^6\Big\{\bE\Big[\Big(\frac{A_{rc}}{\sqrt m}\Big)^i\Big]+
\bE\Big[\Big(\frac{B_{rc}}{\sqrt m}\Big)^i\Big]\Big\} = O\Big(\frac{1}{\sqrt n}\Big).
\end{align*}
\end{proof}
The proof of Theorem \ref{thm:cdmasparsedense} is very similar to the one 
above. We only stress the differences below.
\begin{proof}[Proof of Theorem~\ref{thm:cdmasparsedense}]
Let $\matA^\gamma_n$ and $\matB_n$ be as defined in the statement.
We modify the definition of $\matD_n(r,c,s)$ used in the last proof,
as follows
\begin{align*}
D_{ij} = \left\{\begin{array}{ll}
\frac{1}{\sqrt \gamma}A_{ij}^{\gamma},& \text{ if } i< r \text{ or } i=r \text{ and } j< c,\\
s,& \text{ if } i=r, \text{ and } j=c,\\
\frac{1}{\sqrt m}B_{ij},& \text{ otherwise}.
\end{array}\right.
\end{align*}
Now following the proof of Theorem~\ref{thm:cdmauniversal},
and assuming without loss of generality $\gamma\ge 1$, we get 
again
\begin{align*}
\bE\langle (\partial_{rc}\ham)^4\rangle
\leq K_1\Big(1+s^4\Big) \, .
\end{align*}
(The final step consists as in the previous proof, in bounding
the sums $\sum_{i\in [n]\setminus c}\bE[D_{ri}^4]$ and\\
$\sum_{i\neq j\in [n]\setminus c}\bE[D_{ri}^2D_{rj}^2]
\bE[\<x_i^2x_j^2\>_{\sim r}]$.)
This in turn implies
$\bE[|\partial^3_{rc}f(\matD_n(r,c,s),Z)|] \leq (K_1'/n)(1+ |s|^3)$. 
Since the means and variances of the entries of
$\matA^{\gamma}_n$ and $\matB_n$
are equal, we have $a_i=b_i=0$. 
Applying Theorem~\ref{thm:souravmodified}, we get
\begin{align*}
|\bE[f(\gamma^{-1/2}\matA^\gamma_n,Z)] - \bE[f(m^{-1/2}\matB_n,Z)]| &\leq
\frac{K_1'}{n}\sum_{r=1}^m\sum_{c=1}^n\Big\{\bE_{A^\gamma_{rc}}\int_0^{A^\gamma_{rc}/\sqrt\gamma} (1+|s|^3)(\frac{A_{rc}^{\gamma}}{\sqrt{\gamma}}-s)^2 \de s \\
&\phantom{=====}+ \bE_{B_{rc}}\int_0^{B_{rc}/\sqrt m} (1+|s|^3)(\frac{B_{rc}}{\sqrt{m}}-s)^2 \de s \Big\}\\
& \leq m K_2
\sum_{i=3}^6\Big\{\bE\Big[\Big(\frac{A_{rc}^{\gamma}}{\sqrt\gamma}\Big)^i\Big]+
\bE\Big[\Big(\frac{B_{rc}}{\sqrt
m}\Big)^i\Big]\Big\}\\
&\le K_3\Big(\frac{1}{\sqrt\gamma}+\frac{1}{\sqrt n}\Big).
\end{align*}
Now taking the limit $n\to\infty$ first and then the limit $\gamma\to\infty$
gives the result.
\end{proof}
%
%
\subsection{LASSO}\label{sec:lassoproof}

The proof of Theorem~\ref{thm:lassouniversal} repeats some arguments
already present in the proof of Theorem~\ref{thm:cdmauniversal} 
presented in the previous section. We shall omit such repetitions and
instead focus on the new ideas required.
\begin{proof}[Proof of Theorem~\ref{thm:lassouniversal}]
Without loss of generality, we will assume 
$x_{\rm max}=1$. Define $\cX = [-1,1]$ and, for $\delta>0$, define 
$\cX_\delta = \{ k\delta\, : \;k\in\integers,
\, |k\delta|\leq 1\}$. In words $\cX_{\delta}$ is a grid of points 
 in the interval $[-1,1]$ with spacing $\delta$. 
Recall that $x_0$ is a fixed deterministic signal with $||x_0||_{\infty}\le 1$,
and the resulting measurements read
$Y=A_nx_0 +Z$, where $Z$ is noise vector with i.i.d. Gaussian 
component. Define the Hamiltonian function
$\ham:\reals^n\times\reals^m\times \reals^{m\times n}\to\reals$
by letting 
\begin{eqnarray*}
\ham(x,z,A_n) &=&\lambda\Vert x\Vert_1+ 
\frac{1}{2} \Vert y - A_n x\Vert_2^2\\
&=&\lambda\Vert x\Vert_1+ 
\frac{1}{2} \Vert z - A_n (x-x_0)\Vert_2^2
\, .
\end{eqnarray*}
 With this
definition, $L(A_n) = \frac{1}{n}\min_{x\in\mathcal{X}^n }\{\ham(x,z,A_n)\}$.
Let $L_{\delta}(A_n) = \frac{1}{n}\min_{x\in\mathcal{X}_\delta^n
}\{\ham(x,z,A_n)\}$.
Our proof follows by first showing that there exists a constant $C$ 
such that 
\begin{align*}
\big|\bE[L_\delta(n^{-1/2}\matA_n)] -\bE[L(n^{-1/2}\matA_n)]\big|
\leq C\, \delta,\;\;
\big|\bE[L_\delta(n^{-1/2}\matB_n)] -\bE[L(n^{-1/2}\matB_n)]\big|
\leq C\, \delta\, .
\end{align*}
Obviously $L_\delta(A_n)\ge L(A_n)$.
In order to prove the converse bound, let $\hx$ be a minimizer of 
$\ham(x,z,A_n)$ in $\mathcal{X}^n$, and denote by $x_\delta$
its closest approximation in $\cX_{\delta}^n$. Obviously 
$|x_{\delta,i}-\hx_i|\leq \delta$ for all $i\in [n]$.
We then have
\begin{align}\label{eqn:diffham}
\frac{1}{n}|\ham(\hx,z,n^{-1/2}\matA_n)-&\ham(x_\delta,z,n^{-1/2}\matA_n)| 
\le  \lambda\delta +
\frac{1}{2n}\Big|\Vert y-n^{-1/2}\matA_n \hx\Vert_2^2 - \Vert y-n^{-1/2}\matA_n
x_\delta\Vert_2^2\Big|\nonumber\\
& = \lambda\delta +\frac{1}{2n} \Big|\big(\frac{1}{\sqrt
n}\matA_n(x_\delta-\hx)\big)^\top 2z +
\big(\frac{1}{\sqrt n}\matA_n(\hx-x_\delta)\big)^\top\frac{1}{\sqrt
n}\matA_n(\hx+x_{\delta}-2x_0)\Big|\nonumber\\
&\leq \lambda\delta +
\frac{1}{2n}\Big\Vert\frac{1}{\sqrt n}\matA_n(\hx-x_\delta)\Big\Vert_2 2
\Vert z\Vert_2 +
\frac{1}{2n^2}\, \sigma_{\rm max}(\matA_n)^2\|\hx-x_{\delta}\|_2
\|\hx+x_{\delta}-2x_0\|_2
\nonumber\\
&\leq \lambda\delta + \sigma_{\rm max}(n^{-1/2}\matA_n)\frac{1}{\sqrt n}
\,\delta\, \Vert z \Vert_2 +
2 \, \sigma_{\rm max}(n^{-1/2}\matA_n)^2\delta\, ,
\end{align}
where we used $\|x_{\delta}\|_2$, $\|\hx\|_2$, $\|x_0\|_2\le \sqrt{n}$ and
$\|\hx-x_{\delta}\|_2\le \delta\sqrt{n}$. 
Here $\sigma_{\rm max}(A_n)$ is the largest singular value of $A_n$. 
From \cite{Seginer} we know that $\bE[\sigma_{\max}(n^{-1/2}\matA_n)^2] < K$, for some
constant $K<\infty$. 
Combining this with Eq.~\eqref{eqn:diffham}, and 
using the Cauchy-Schwartz inequality, 
we get
\begin{align*}
\big|\bE[L_\delta(n^{-1/2}\matA_n)] -\bE[L(n^{-1/2}\matA_n)]\big| 
\leq C\delta\, .
\end{align*}
A similar result obviously holds for the matrix ensemble 
$\matB_n$ as well. By triangular inequality, we have
\begin{align*}
\lim_{n\to\infty}\big|\bE[L(n^{-1/2}\matA_n)] -\bE[L(n^{-1/2}\matB_n)]\big| 
\le 
C\delta + \lim_{n\to\infty}\big|\bE[L_\delta(n^{-1/2}\matA_n)] -
\bE[L_\delta(n^{-1/2}\matB_n)]\big|\, .
\end{align*}
Since this inequality holds for any $\delta>0$, the proof of the theorem
reduces to showing that 
$\lim_{n\to\infty}\big|\bE[L_\delta(n^{-1/2}\matA_n)] -
\bE[L_\delta(n^{-1/2}\matB_n)]\big| = 0$.

In order to prove this, define
\begin{align*}
f(\delta, \beta,z,A_n) &= -\frac{1}{\beta n}\log
\Big\{\sum_{x\in\mathcal{X}_\delta^n}
e^{-\beta \ham(x,z,A_n)}\Big\}\, .
\end{align*}
It is easy to see that
\begin{align}\label{eqn:zerotemp}
\lim_{\beta\to\infty}f(\delta,\beta,z,A_n) = \frac{1}{n}\min_{x \in
\mathcal{X}_\delta^n} \ham(x,z,A_n)\, .
\end{align}
Further, a straightforward calculation shows that 
\begin{eqnarray*}
\beta^2 \frac{\partial f}{\partial \beta}(\delta, \beta, z, A_n)
=H(p_{\beta,A_n})\, ,
\end{eqnarray*}
where $H(p)$ denotes Shannon's entropy of the probability distribution 
$p$ and $p_{\beta,A_n}(x) \propto \exp\{-\beta\ham(x,z,A_n)\}$.
Of course $0\le H(p_{\beta,A_n})\le n\log |\cX_{\delta}|$ whence
\begin{align} \label{eqn:convexity}
-\frac{1}{\beta^2}\log\Big(\frac{2}{\delta}\Big)\le 
\frac{\partial f}{\partial \beta}(\delta, \beta, z, A_n)
\le 0\, .
\end{align}
Therefore, 
\begin{align}\label{eqn:deltabound}
\lim_{n\to\infty}&\big|\bE[L_\delta(n^{-1/2}\matA_n)] -
\bE[L_\delta(n^{-1/2}\matB_n)]\big| \nonumber\\
&\stackrel{}{=}
\lim_{n\to\infty}\big|\lim_{\beta\to\infty}\bE[f(\delta,\beta,Z,n^{-1/2}\matA_n)]
-\lim_{\beta\to\infty}\bE[f(\delta,\beta,Z,n^{-1/2}\matB_n)]\big|\nonumber\\
&\stackrel{}{\leq}\lim_{n\to\infty}\big|\bE[f(\delta,\beta,Z,n^{-1/2}\matA_n)]
-\bE[f(\delta,\beta,Z,n^{-1/2}\matB_n)]\big|
+\int_{\beta}^\infty\frac{1}{s^2}\log\Big(\frac{2}{\delta}\Big)\,\, \de s\, ,
\end{align}
where the first step follows from \eqref{eqn:zerotemp}
and the second from \eqref{eqn:convexity}.
Notice the close resemblance between the function $f(\delta,\beta,Z,A_n)$
defined here and the one used in the previous section.
Using the same arguments developed there for the proof
of Theorem \ref{thm:cdmauniversal} it is immediate to show that
\begin{align*}
\big|\bE[f(\delta,\beta,Z,n^{-1/2}\matA_n)]-\bE[f(\delta,\beta,Z,n^{-1/2}
\matB_n)]\big|
\leq O\Big(\frac{1}{\sqrt n}\Big)\, .
\end{align*}
Combining this with Eq.~\eqref{eqn:deltabound}, we get
\begin{align*}
\lim_{n\to\infty}&\big|\bE[L_\delta(n^{-1/2}\matA_n)] -
\bE[L_\delta(n^{-1/2}\matB_n)]\big| \leq\frac{1}{\beta} \log\Big(
\frac{2}{\delta}\Big)\, .
\end{align*}
The proof is completed by letting $\beta\to\infty$.
\end{proof}
%
%
\subsection{Wishart Matrices}\label{sec:wishartproof}

The proof is analogous the proof for universality of the
Wigner's semi-circle law developed in \cite{Sou05unpublished}. 
\begin{proof}[Proof of Theorem~\ref{thm:wishartsparsedense}]
By the analiticity of the Stieltjes transform, it is sufficient to prove the 
claim for ${\rm Im}(z)$ large enough.

For an $m\times n$ matrix $A_n$ and any $z\in\complex\backslash \reals$, let 
\begin{align*}
f(A_n) \equiv \frac1n\trace\big((A_n^\top A_n + zI_n)^{-1}\big)\, .
\end{align*}
In order to simplify the notation we drop the subscript $n$ and denote
the partial derivative with respect to $A_{ij}$ by $\partial_{ij}$.
Define $R=  (A^\top A + zI)^{-1}$. Therefore $(A^\top A+
zI)R = I$, which implies $\partial_{ij}((A^\top A+
zI)R) = 0$. This yields 
\begin{align*}
\partial_{ij} R = -R\partial_{ij}(A^\top A) R\, .
\end{align*}
Let $\matone_{ij}$ 
denote the matrix with $(ij)$-th entry equal to $1$ and the remaining
entries equal to $0$. Then 
\begin{align*}
\partial_{ij}( A^\top A )&= \matone_{ji} A + A^\top\matone_{ij}, \\
\partial_{ij}^2( A^\top A ) &= 2 \matone_{ii}\, ,\\
\partial_{ij}^3( A^\top A ) & = 0\, .
\end{align*}
Using the identity $\trace(AB) = \trace(BA)$, we get
\begin{align}\label{eqn:partial3}
\partial_{ij} f &= -\frac{1}{n}\trace\Big(\partial_{ij}(A^\top A)
R^2\Big),\nonumber\\
\partial^2_{ij} f &=
\frac{2}{n}\trace\Big(\partial_{ij}(A^\top A) R\partial_{ij}(A^\top A)R^2\Big)
- \frac{1}{n}\trace\Big(\partial_{ij}^2(A^\top A)R^2\Big),\nonumber\\
\partial^3_{ij} f &= -\frac{6}{n} \trace\Big(
\partial_{ij}(A^\top A)R\partial_{ij}(A^\top A)R\partial_{ij}(A^\top A)R^2\Big)
\phantom{=} +\frac{3}{n}\trace\Big(\partial^2_{ij}(A^\top A)R\partial_{ij}(A^\top A)R^2\Big)
\nonumber\\
&+
\frac{3}{n}\trace\Big(\partial_{ij}(A^\top A)R\partial^2_{ij}(A^\top A)R^2\Big).
\end{align} 
Note that $R$ is a symmetric matrix and therefore
is diagonalizable. Moreover, note that the
singular values of $R^{-1}$ are bounded by $|v|^{-1}$, where $v = \text{Im}(z)$.
Let $\Vert A\Vert$ and $\Vert A \Vert_2$ denote the Frobenius norm and
the spectral norm of $A$ respectively. From Cauchy-Schwartz
inequality we have $|\trace(AB)|\leq \Vert A\Vert\,\Vert B\Vert$.
Therefore, we can bound the first term as
\begin{align}\label{eqn:trace1}
|\trace(\partial_{ij}(A^\top A)R\partial_{ij}(A^\top A)R\partial_{ij}(A^\top A)R^2)|
&{\leq}
\Vert(\partial_{ij}(A^\top A)R)^2\Vert\Vert\partial_{ij}(A^\top A)R^2\Vert\nonumber\\
&\stackrel{(a)}{\leq}\frac{1}{|v|}\, \Vert\partial_{ij}(A^\top A)R\Vert^3\nonumber\\
&\stackrel{(b)}\leq\frac{1}{|v|^4}\, \Vert\partial_{ij}(A^\top A)\Vert^3\, ,
\end{align}
where we have used 
$\Vert A B\Vert\leq\Vert A\Vert\Vert B\Vert$ in $(a)$ and
$\Vert A B\Vert\leq \Vert A\Vert\Vert B\Vert_2$ in both $(a)$ and $(b)$.

Similarly one can bound the second and third terms of \eqref{eqn:partial3} as
\begin{align}\label{eqn:trace2}
|\trace(\partial^2_{ij}(A^\top A)R\partial_{ij}(A^\top A)R^2)|
\leq
\Vert\partial_{ij}^2(A^\top A)\Vert\Vert\partial_{ij}(A^\top A)\Vert\frac{1}{|v|^3}
= \Vert\partial_{ij}(A^\top A)\Vert\frac{2}{|v|^3}. 
\end{align}
Finally, we can bound $\Vert\partial_{ij}(A^\top A)\Vert$ as follows
\begin{align}\label{eqn:bound1}
\Vert\partial_{ij}(A^{\top} A)\Vert \leq \Vert\matone_{ji}A\Vert +
\Vert A^\top\matone_{ij}\Vert = 2 \Vert A^\top\matone_{ij}\Vert =
2\Big(\sum_{k=1}^mA_{kj}^2\Big)^{1/2}\, .
\end{align}

Let us now consider the random matrices $\matA^\gamma_n$ and $\matB_n$ as defined in
the theorem. Let $\matC_n(r,c,s)$ denote the matrix as defined in
Section~\ref{sec:cdmaproof}, i.e.,
\begin{align*}
C_{ij} = \left\{\begin{array}{ll}
\frac{1}{\sqrt \gamma}A^\gamma_{ij},& \text{ if } i< r \text{ or } i=r \text{ and } j< c,\\
s,& \text{ if } i=r, \text{ and } j=c,\\
\frac{1}{\sqrt m}B_{ij},& \text{ otherwise}.
\end{array}\right.
\end{align*}
Using the equations \eqref{eqn:partial3},
\eqref{eqn:trace1}, \eqref{eqn:trace2}, and \eqref{eqn:bound1}, we get
\begin{align*}
\bE\big\{|\partial_{rc}^3f(\matC_n(r,c,s))|\big\} &\leq
\frac{K_0}{n}\E\Big\{\Big(1+\sum_{k=1}^mC_{kc}^2\Big)^{3/2}\Big\}\\
&\le \frac{K_1}{n}
(1+s^3)\, .
\end{align*}
The  proof is finished as for Theorem~\ref{thm:cdmasparsedense}.
\end{proof}

%
%
\subsection{Proof of Corollary~\ref{cor:mimosparsedense}}

Throughout this proof we will assume $\sigma=1$, for simplicity
of notation (general $\sigma>0$ follows exactly the same argument).

Convergence of Stieltjes transform implies weak convergence
of the expected distribution of eigenvalues \cite[Theorem~2.4.4]{Guionnet}.
This means  that for any continuous bounded function $f$\footnote{Note that we
have two limits on the left hand side. This can be taken care of by noticing that 
$\lim_{\gamma\to\infty}\lim_{n\to\infty}f(\gamma,n,x) = f(x)$ is equivalent 
to saying that $\lim_{n\to\infty}f(\gamma_n,n,x)=f(x)$ along any sequence of $\{\gamma_n\}$s
satisfying $\lim_{n\to\infty}\gamma_n = \infty$.} 
\begin{align}
\lim_{\gamma\to\infty}\lim_{n\to\infty}\frac{1}{n} \sum_{i=1}^n
\bE[f(\lambda_i(\gamma^{-1}(\matA^\gamma_n)^\top\matA_n^\gamma))] = 
\lim_{n\to\infty}
\frac{1}{n}\sum_{i=1}^n
\bE[f(\lambda_i(n^{-1}\matB_n^\top\matB_n))]\, .\label{eq:WeakConvergence}
\end{align}
The limit on the right hand side exists because the expected
distribution of eigenvalues of Wishart matrices converges 
\cite{MarPas,BaiBook}.
Moreover, the limiting distribution function is continuous. Therefore, 
the convergence of the distributions implies the convergence of expectations
for any bounded measurable function, not necessarily continuous
(by the bounded convergence theorem).
We are interested in estabilishing a result of the form 
(\ref{eq:WeakConvergence}) for the function $f(x) = \log(1+x)$, which is 
not bounded. 
However, note that only the behavior of
$f$ in the region $x\geq 0$ is relevant, because $\lambda_i\geq 0$. In
the domain of interest the function $f$ is bounded from below. In order to tackle the
issue of boundedness from above, we use a standard truncation trick.
We define $g_M(x) = f(x)\indicator{x\leq M}$, for some $0< M < \infty$. Note
that the function $g_M$ is bounded on $\reals_+$. Therefore
\begin{align}\label{eqn:boundedsparsedense}
\lim_{\gamma\to\infty}\lim_{n\to\infty}\frac{1}{n} \sum_{i=1}^n
\bE[g_M(\lambda_i(\gamma^{-1}(\matA^\gamma_n)^\top\matA_n^\gamma))] = 
\lim_{n\to\infty}
\frac{1}{n}\sum_{i=1}^n
\bE[g_M(\lambda_i(n^{-1}\matB_n^\top\matB_n))]\, .
\end{align}
Note that 
\begin{align}\label{eqn:sparseboundedunbounded}
\lim_{\gamma\to\infty}\lim_{n\to\infty}&\frac{1}{n} \sum_{i=1}^n
\bE\Big\{\big|g_M(\lambda_i(\gamma^{-1}(\matA^\gamma_n)^\top\matA_n^\gamma))-
f(\lambda_i(\gamma^{-1}(\matA^\gamma_n)^\top\matA_n^\gamma))\big|\Big\}
\nonumber\\
&= \lim_{\gamma\to\infty}\lim_{n\to\infty}\frac{1}{n} \sum_{i=1}^n
\bE\Big\{\log(1+\lambda_i(\gamma^{-1}(\matA^\gamma_n)^\top\matA_n^\gamma))
\indicator{\lambda_i> M}\Big\} \nonumber\\
&\leq \lim_{\gamma\to\infty}\lim_{n\to\infty}\frac{1}{n} \sum_{i=1}^n
\bE\Big\{\lambda_i^2(\gamma^{-1}(\matA^\gamma_n)^\top\matA_n^\gamma))/M\Big\}\nonumber\\
&=\lim_{\gamma\to\infty}\lim_{n\to\infty} \frac{1}{Mn\gamma^2}\bE\,\trace 
\Big\{\Big((A_n^\gamma)^\top
A_n^\gamma \Big)^2\Big\}\nonumber\\
& = \lim_{\gamma\to\infty}\lim_{n\to\infty} \frac{1}{Mn\gamma^2}
\bE\Big\{\sum_{i,j}\Big(\sum_{k} A_{ki}A_{kj}\Big)^2\Big\}\nonumber\\
& =  \lim_{\gamma\to\infty}\lim_{n\to\infty} \frac{1}{Mn\gamma^2}\Big\{\sum_{i\neq
j}\sum_{k}\bE\big\{A_{ki}^2\big\} \bE\big\{A_{kj}^2\big\} + \sum_{i}
\sum_{k_1\neq k_2}\bE\big\{A_{k_1i}^2\big\}\bE\big\{A_{k_2i}^2\big\} +
\sum_{i,k}\bE\big\{A_{ki}^4\big\}\Big\}\nonumber\\
& \le \frac{K}{M}\, ,
\end{align}
for a constant $K$ independent of $M$, $\gamma$ as long as $\gamma\ge 1$.
Using a similar argument we can show that
\begin{align}\label{eqn:denseboundedunbounded}
\lim_{n\to\infty}
\frac{1}{n}\sum_{i=1}^n
\bE\big\{|g_M(\lambda_i(n^{-1}\matB_n^\top\matB_n))]
-f(\lambda_i(n^{-1}\matB_n^\top\matB_n))|\}\ \leq \frac{K'}{M}
\end{align}
for a constant $K'$ independent of $M$.
From \eqref{eqn:boundedsparsedense}, \eqref{eqn:sparseboundedunbounded},
\eqref{eqn:denseboundedunbounded} we get 
\begin{align*}
\lim_{\gamma\to\infty}\lim_{n\to\infty}\big|\bE[C_n(\gamma^{-1/2}\matA^\gamma_n)]
- \bE[C_n(n^{-1/2}\matB_n)]\big| \leq \frac{K+K'}{M}\, .
\end{align*}
Now taking the $\lim_{M\to\infty}$ gives the desired result.

\endproof

\bibliographystyle{amsalpha}

\begin{thebibliography}{99}

\bibitem{Sou06}
S.~Chatterjee, ``A generalization of the {Lindeberg} principle,'' \emph{The
  Annals of Probability.}, vol.~34, no.~6, pp. 2061--2076, 2006.


\bibitem{MarPas}  V.~A.~Mar{\v{c}}enko and L.A.~Pastur,
``Distribution of Eigenvalues for Some Sets of Random Matrices'',
Math. USSR Sb. 1 (1967) 457-483 


\bibitem{StatMech} J.~Cardy, \emph{Scaling and Renormalization in Statistical
Physics}, Cambridge University Press, Cambridge, 1996


\bibitem{Tel99}
E.~Telatar, ``Capacity of multi-antenna {Gaussian} channels,'' \emph{European
  Transactions on Telecommunications}, vol.~10, no.~6, pp. 585--595, 1999.

\bibitem{DoTUniversality} D.~L. Donoho and J.~Tanner,
``Observed Universality of Phase Transitions in High-Dimensional Geometry,
with Implications for Modern Data Analysis and Signal Processing,''
Phil. Trans. R. Soc. A 13, November 2009, 4273-4293

\bibitem{DoT09}
D.~L. Donoho and J.~Tanner, ``Counting faces of randomly-projected polytopes
  when the projection radically lowers dimension,'' \emph{Journal of the AMS},
  vol.~22, no.~1, pp. 1--53, 2009.


\bibitem{Kesten} H.~Kesten, ``Symmetric random walks on groups,''
\emph{Trans. Amer. Math. Soc.} 92 (1959), 336-354

\bibitem{McKay} B.~D.~McKay, ``The expected eigenvalue distribution of a large 
regular graph''\emph{Linear Algebra Appl.} 40 (1981) 203-216

\bibitem{VerduBook} S.~Verdu, \emph{Multiuser Detection},
Cambridge University Press, Cambridge, 1998

\bibitem{GrA96}
A.~J. Grant and P.~D. Alexander, ``Randomly selected spreading sequences 
for  coded CDMA,'' in \emph{4th Int. Spread Spectrum 
Techniques and Applications},  Mainz, Germany, Sept. 1996, pp. 54--57.

\bibitem{KoM08}
S.~B. Korada and N.~Macris, ``Tight bounds on the capacity of binary input
  random {CDMA} systems,'' \emph{accepted in IEEE Trans. Inform. Theory}
{\sf arXiv:0803.1454} (2008)

\bibitem{MoT06itw}
A.~Montanari and D.~Tse, ``Analysis of belief propagation for non-linear
  problems: The example of {CDMA} (or : {H}ow to prove {T}anaka's formula),''
  in \emph{Proc. of the IEEE Inform. Theory Workshop}, Punta del Este, Uruguay,
  Mar 13--Mar 17 2006.

\bibitem{Tanaka} T.~Tanaka,
``A statistical-mechanics approach to large-system analysis of CDMA
multiuser detectors,'' IEEE Trans. on Inform. Theory, 48 (2002),
2888-2910

\bibitem{Tibshirani} R.~Tibshirani,
``Regression shrinkage and selection via the LASSO,''
J. Roy. Statist. Soc. B 58 (1996), 267--288

\bibitem{ChenDonoho} S.S. Chen, D.L.~Donoho, and M.A.~Saunders,
``Atomic decomposition by basis pursuit,'' SIAM Review 43 (2001),
129--159

\bibitem{Guionnet} G.~W.~Anderson, A.~Guionnet, and O.~Zeitouni,
\emph{An Introduction to Random Matrices},
Cambridge University Press, Cambridge, 2009

\bibitem{MM} M.~M\'ezard and A.~Montanari,
\emph{Information, Physics, and Computation}, Oxford University Press,
Oxford 2009

\bibitem{Tala} M.~Talagrand,
\emph{Spin Glasses: A Challenge for Mathematicians: Cavity and Mean 
Field Models}, Springer, New York, 2003

\bibitem{GuerraToninelli} F.~Guerra and F.~L.~Toninelli,
``The High Temperature Region of the Viana-Bray Diluted Spin Glass Model,''
J.~Stat.~Phys. 115 (2004) 531-555


\bibitem{Tal02}
M.~Talagrand, ``Gaussian averages, Bernoulli averages, and Gibbs' measures,''
\emph{Random Structures and Algorithms}, vol.~21, no.~3-4, pp. 197--204, 2002.

\bibitem{Ton02}
F.~L. Toninelli, ``Rigorous results for mean field spin glasses: thermodynamic
  limit and sum rules for the free energy,'' Ph.D. dissertation,
 Scuola Normale Superiore, Pisa, Italy, 2002.

\bibitem{Seginer}
Y. Seginer, ``The Expected Norm of Random Matrices", {\em Combinatorics, Probability
and Computing}, vol.~9 pp. 149-166.

\bibitem{Sou05unpublished}
S.~Chatterjee, ``A simple invariance theorem,'' \emph{unpublished.}, vol.
  http://arxiv.org/abs/math/0508213v1, 2005.

\bibitem{BaiBook} Z.~Bai and J.~W.~Silverstein
\emph{Spectral Analysis of Large Dimensional Random Matrices},
Springer, New York, 2009

\end{thebibliography}

\end{document}